
\documentclass[12pt,notitlepage]{article}%
\usepackage{amsthm}
\usepackage{setspace}
\usepackage{eurosym}
\usepackage{subfigure}
\usepackage{xcolor}
\usepackage{amssymb}
\usepackage{mathrsfs}
\usepackage{graphicx}
\usepackage[colorlinks,linkcolor=links,citecolor=cites,urlcolor=MyDarkBlue]%
{hyperref}
\usepackage{amsfonts}
\usepackage{amsmath}
\usepackage{amstext}
\usepackage{appendix}
\usepackage{appendix}
\usepackage{mathpazo}
\usepackage{tikz}
\usepackage{verbatim}

\usetikzlibrary{trees}
\usetikzlibrary{matrix}
\usetikzlibrary{decorations.pathreplacing}
\usetikzlibrary{topaths,calc}
\usepackage{natbib}%
\setcounter{MaxMatrixCols}{30}
\providecommand{\U}[1]{\protect\rule{.1in}{.1in}}
\newtheorem{theorem}{Theorem}

\newtheorem{definition}{Definition}

\newtheorem{lemma}{Lemma}

\newtheorem{proposition}{Proposition}

\numberwithin{equation}{section}

\definecolor{MyDarkBlue}{rgb}{0,0.08,0.45}
\definecolor{cites}{HTML}{324b13}
\definecolor{links}{HTML}{1a663b}
\definecolor{MyLightMagenta}{cmyk}{0.1,0.8,0,0.1}
\hypersetup{
colorlinks,citecolor=blue,filecolor=black,linkcolor=blue,urlcolor=blue
}
\linespread{1.5}
\parindent=0.8cm
\setlength{\parskip}{0.5ex}
\usepackage[top=0.9in,bottom=0.9in,left=0.9in,right=0.9in]{geometry}

\usepackage{setspace}
\usepackage{footmisc}
\usepackage{lipsum}

\setlength{\footnotemargin}{6mm}

\begin{document}

\title{Multilateral matching with scale economies\thanks{First online version: Oct, 2023; this version: Feb, 2025.}}
\author{Chao Huang\thanks{Institute for Social and Economic Research, Nanjing Audit University. Email: huangchao916@163.com.}}
\date{}
\maketitle

\begin{abstract}
This paper studies multilateral matching in which agents may negotiate contracts within any coalition. We assume scale economies such that an agent substitutes some existing contracts with new ones only if the latter involve a set of partners that is weakly larger than the original. A weakly setwise stable (or setwise stable) outcome exists and can be found by a Constrained Serial Dictatorship algorithm in markets with scale economies (resp. ordinal scale economies). The scale economies condition applies to an environment in which agents cooperate to achieve targets, such as markets in which countries sign bilateral agreements.
\end{abstract}

\textit{Keywords}: multilateral matching; stability; substitutes; complements;

\textit{JEL classification}: C62, C78, D47

\section{Introduction}\label{Sec_intro}

In both business and social contexts, people often engage in multilateral agreements to achieve their goals. For example, firms collaborate to produce goods that require complementary inputs and technologies. Professional services in commercial activities are frequently provided by different organizations. Governments of various countries sign agreements on international affairs. Social activities and relationships can be modeled as a matching market, where an outcome is represented by a set of contracts signed by agents. In these markets, stability is a key equilibrium concept that captures whether an outcome is immune to renegotiations among certain agents. Roughly speaking, an outcome is considered stable if no group of agents can improve their situation by signing new contracts among themselves while possibly dropping some contracts of the original outcome.

It is well-established that stable outcomes exist in two-sided matching markets when there are no complements (\citealp{KC82}, \citealp{RS90},  and \citealp{HM05}). In multilateral markets, where any group of agents can negotiate contracts, \cite{RY20} (henceforth, RY) demonstrated that a stable outcome exists when there are no substitutes. Their conditions are the polar opposite of those in two-sided matching. RY's framework accommodates both transferable and nontransferable utilities, as well as indifferences and externalities. In this paper, we focus on a basic setting of multilateral matching with nontransferable utilities, and without indifferences or externalities. Although this is a special case of RY's framework, our assumptions are widely used in matching theory. When an agent's available set of contracts expands, and she chooses to drop some contracts while signing new ones, we say that she substitutes the dropped contracts with the newly signed ones. In this basic setting, RY's complementarity condition rules out such substitutions. This condition captures the defining feature of complementarities in many real-world multilateral collaborations.

Since real-life matching markets often involve both substitutes and complements, the existence of stable outcomes in the presence of both has been a major challenge in matching theory. A growing body of literature addresses this gap. For instance, \cite{SY06}, \cite{O08}, and \cite{HKNOW13} allowed for within-group substitutes and cross-group complements in different economic settings. \cite{HK10} and \cite{HK19} proposed generalized substitutability conditions in many-to-one matching. \cite{AH18}, \cite{CKK19}, and \cite{NV18,NV19} explored approximate or near-feasible stability. \cite{BK19} and \cite{H23} studied demand types based on agents' choices. 

In this paper, we introduce a condition of \textbf{scale economies} in multilateral matching, defined as follows: An agent substitutes some contracts with new ones only if the newly signed contracts involve a weakly larger set of partners than the dropped contracts. Our assumption extends the complementarity condition by permitting a broad range of substitutes. Scale economies are commonplace in modern economies. In environments where firms collaborate in production or agents work together in business ventures, our assumption applies to cases where having more partners provides advantages. Our framework also allows a group of agents to negotiate over contracts they have signed. For example, in a market for a single type of business, it is natural for a group of agents to negotiate at most one contract. Our condition accommodates this scenario, where contracts signed by the same group of agents are pure substitutes.

We demonstrate that a weakly setwise stable outcome exists in a market with scale economies. Weak setwise stability relaxes the concept of stability by disregarding blocks in which agents make inconsistent decisions. This notion was proposed by \cite{KW09} in the context of many-to-many matching and adopted by \cite{BH21} for multilateral matching. The latter characterized ``venture structures'' that guarantee the existence and efficiency of (weakly) setwise stable outcomes. We define an individually rational outcome as \textbf{constrained efficient} if no other individually rational outcome Pareto dominates it. In a market with scale economies, we prove that an individually rational outcome cannot be weakly setwise blocked if it is constrained efficient. This result implies the existence of a weakly setwise stable outcome in such markets, as there is always at least one individually rational outcome that is constrained efficient. 

We can find an individually rational and constrained efficient outcome using a \textbf{Constrained Serial Dictatorship} (henceforth, CSD) algorithm. In this algorithm, agents are ordered exogenously. Starting with the first agent, each agent sequentially adds contracts into a pool. Specifically, each agent includes contracts signed by herself but not by any agents ordered before her, maximizing her welfare from the contracts in the pool while ensuring that the pool remains part of an individually rational outcome. As the algorithm proceeds, the set of contracts in the pool expands, ultimately resulting in an individually rational and constrained efficient outcome. Consequently, the CSD algorithm produces a weakly setwise stable outcome in a market with scale economies.

We further show that the scale economies condition applies to a class of \textbf{agent-target markets}, where each contract includes a subset of indivisible elementary cooperations among the signatories. In these markets, agents collaborate to achieve their targets, with each target of an agent requiring a specific set of elementary cooperations involving this agent. A concrete example is a market in which countries sign bilateral agreements on trade, cultural exchange, or criminal investigations. Each agreement between two countries specifies detailed collaborations, and each country signs agreements to achieve its targets related to economic development, geopolitics, or international influence. In this environment, a country may substitute one agreement with another if both involve the same partner but would not replace an agreement involving one country with an agreement involving a different country.

We also examine a specific type of market structure where each group of agents negotiates at most one contract—a natural framework for single-sector markets. In this framework, we prove that stability and weak setwise stability coincide, provided substitutions are limited to contracts involving the same group of agents.

Weakly setwise stable outcomes and stable outcomes are not immune to blocks in which agents may coordinate not to choose their best alternatives. Setwise stability, proposed by \cite{S99}, strengthens weak setwise stability to resist such blocks. \cite{EO06} demonstrated that a setwise stable outcome exists in many-to-many matching under a strengthened substitutability condition. We show that a setwise stable outcome exists under a stronger scale economies condition in multilateral matching. We introduce the concept of \textbf{ordinal scale economies} as follows: If an agent holding an individually rational set of contracts becomes better off and maintains individual rationality after dropping some contracts and signing new ones, she continues to improve her position and maintain individual rationality by re-adding any of the dropped contracts that involve partners she does not cooperate with in the newly signed contracts. We prove that, in a market with ordinal scale economies, an individually rational outcome cannot be setwise blocked if and only if it is constrained efficient. Consequently, the CSD algorithm produces a setwise stable outcome in such a market.

The remainder of this paper is organized as follows. Section \ref{Sec_M} introduces the model of multilateral matching with nontransferable utilities. Section \ref{Sec_Scale} presents the scale economies condition, the CSD algorithm, an application, and a special market. Section \ref{Sec_ordinal} proposes the ordinal scale economies condition. Omitted proofs are relegated to the Appendix.

\section{Model\label{Sec_M}}

In this section, we introduce the model of multilateral matching with nontransferable utilities. An alternative setting assumes continuous monetary transfers among agents and often assumes a transferable utility for each agent, see \cite{HK15} and \cite{RY20,RY23}.

There is a finite set $I$ of agents and a finite set $X$ of contracts. Each contract $x\in X$ is signed by a subset of agents $\mathrm{N}(x)\subseteq I$. We assume each contract is signed by at least two agents: $|\mathrm{N}(x)|\geq2$ for each $x\in X$. For each subset of contracts $Y\subseteq X$ and each agent $i\in I$, let $Y_i\equiv\{x\in Y|i\in \mathrm{N}(x)\}$ be the subset of $Y$ in which each contract involves agent $i$, and let $Y_J\equiv\bigcup_{i\in J}Y_i$. For each subset of contracts $Y\subseteq X$, let $\mathrm{N}(Y)\equiv\bigcup_{x\in Y}\mathrm{N}(x)$ be the set of agents who have contracts in $Y$. Each agent $i\in I$ has a strict preference ordering $\succ_i$ over $2^{X_i}$. For any $Y,Z\subseteq X_i$, we write $Y\succeq_iZ$ if $Y\succ_iZ$ or $Y=Z$. Let $\succ_I$ be the preference profile of all agents. A matching market is a tuple $(I,X,N:X\rightrightarrows I,\succ_I)$. An \textbf{outcome} is a set of contracts $Y\subseteq X$. For each agent $i\in N$, let $\mathrm{C}_i:2^X\rightarrow 2^{X_i}$ be agent $i$'s \textbf{choice function} such that $\mathrm{C}_i(Y)\subseteq Y_i$ and $\mathrm{C}_i(Y)\succeq_iZ$ for each $Y\subseteq X$ and $Z\subseteq Y_i$.

\subsection{Stability concepts}\label{Sec_stable}

An outcome $Y\subseteq X$ is \textbf{individually rational} for agent $i\in I$ if $Y_i=\mathrm{C}_i(Y)$. That is, an outcome is individually rational for agent $i$ if agent $i$ does not want to unilaterally drop any contracts. An outcome $Y\subseteq X$ is individually rational if $Y_i=\mathrm{C}_i(Y)$ for all $i\in I$.

\begin{definition}\label{Def_stable}
\normalfont
\begin{description}
\item[(\romannumeral1)] An outcome $Y\subseteq X$ is \textbf{blocked} by a nonempty $Z\subseteq X\setminus Y$ if $Z_i\subseteq \mathrm{C}_i(Y\cup Z)$ for all $i\in \mathrm{N}(Z)$.

\item[(\romannumeral2)] An outcome is \textbf{stable} if it is individually rational and cannot be blocked.
\end{description}
\end{definition}

When an outcome $Y$ is blocked by $Z$, the set of agents $\mathrm{N}(Z)$ improve themselves by signing contracts from $Z$, while possibly dropping some contracts of the outcome $Y$. Moreover, the newly signed contracts are in associated agents' best choices from the newly signed contracts and the original contracts: $Z_i\subseteq \mathrm{C}_i(Y\cup Z)$ for all $i\in \mathrm{N}(Z)$. The following example from \cite{BH21} shows that stability prevents a class of blocks in which agents make inconsistent decisions.
\begin{equation}\label{exam_in1}
\text{Ana}: \{x,y\}\succ\{x\}\succ\emptyset \qquad\qquad\qquad \text{Bob}: \{y\}\succ\{x\}\succ\emptyset
\end{equation}
There is no stable outcome in this market. Specifically, the outcome $\{x\}$ is not stable since both agents want to sign the contract $y$ in the presence of the contract $x$. However, after signing $y$, Bob wishes to drop $x$, while Ana does not. Stability rules out such blocks in which agents make inconsistent decisions about which contracts to drop. Weak setwise stability, proposed by \cite{KW09}, relaxes stability by disregarding such blocks.

\begin{definition}\label{Def_weak}
\normalfont
\begin{description}
\item[(\romannumeral1)] An outcome $Y\subseteq X$ is \textbf{weakly setwise blocked} by a nonempty $Z\subseteq X\setminus Y$ if there exists an outcome $Y^*\subseteq Y\cup Z$ such that $Z_i\subseteq Y^*_i=\mathrm{C}_i(Y\cup Z)$ for all $i\in \mathrm{N}(Z)$.

\item[(\romannumeral2)] An outcome is \textbf{weakly setwise stable} if it is individually rational and cannot be weakly setwise blocked.
\end{description}
\end{definition}

When an outcome $Y$ is weakly setwise blocked by $Z$, the set of agents $\mathrm{N}(Z)$ improve themselves by signing contracts from $Z$, while possibly dropping some contracts of the outcome $Y$. Moreover, (a) the newly signed contracts are in associated agents' best choices from the newly signed contracts and the original contracts: $Z_i\subseteq \mathrm{C}_i(Y\cup Z)$ for all $i\in \mathrm{N}(Z)$; and (b) the agents from $\mathrm{N}(Z)$ make consistent decisions on which contracts to drop. In particular, the part (b) means that the original outcome is brought into a new one, which is implicitly implied in the definition. Notice that $Y_i^*$ with $i\in \mathrm{N}(Z)$ is agent $i$'s best choice from $Y\cup Z$; consequently, the new outcome can be written as $Y^{**}=Y^*\cup V$ where $V=\{x\in Y|i\in \mathrm{N}(x)\text{ implies }i\notin \mathrm{N}(Z)\}$ is the collection of contracts in $Y$ that are not signed by any of $\mathrm{N}(Z)$. In contrast, the block defined in Definition \ref{Def_stable} does not necessarily result in a new outcome, as illustrated in the market (\ref{exam_in1}). Weak setwise stability is weaker than stability since the former prevents blocks in which agents make coordinated decisions, while the latter also prevents blocks in which agents make non-cooperative decisions on which contracts to drop. For instance, the outcome $\{x\}$ in the market (\ref{exam_in1}) is weakly setwise stable but not stable.

Weakly setwise stable outcomes and stable outcomes are not immune to another class of blocks. Consider another example from \cite{BH21}:
\begin{equation}\label{exam_in2}
\text{Ana} : \{x,y\}\succ\{x,z\}\succ\{y\}\succ\{z\}\succ\emptyset, \qquad\qquad \text{Bob} : \{x,z\}\succ\{y\}\succ\emptyset.
\end{equation}
The outcome $\{y\}$ can be blocked as follows: In the presence of the contract $y$, Ana and Bob may sign contracts $x$ and $z$, and drop $y$. Both agents are better off in the new outcome $\{x,z\}$, which is also individually rational for both agents. Stability and weak setwise stability only prevent blocks in which each agent participating in the block chooses her best choice from the original contracts and the newly signed contracts. In the above block, Ana chooses $\{x,z\}$, which is not her best choice from the original contract $y$ and the newly signed contracts $x$ and $z$. Thus, weak setwise stability and stability do not prevent blocks of this form. Setwise stability strengthens weak setwise stability by also preventing blocks of the above form. In other words, setwise stability prevents blocks in which the agents participating in the block deviate to a new outcome that is better and individually rational for them. 

\begin{definition}\label{Def_setwise}
\normalfont
\begin{description}
\item[(\romannumeral1)] An outcome $Y\subseteq X$ is \textbf{setwise blocked} by a nonempty $Z\subseteq X\setminus Y$ if there exists an outcome $Y^*\subseteq Y\cup Z$ such that $Z\subseteq Y^*$, $ Y^*_i\succ_iY_i$, and $Y^*$ is individually rational for all $i\in \mathrm{N}(Z)$.

\item[(\romannumeral2)] An outcome is \textbf{setwise stable} if it is individually rational and cannot be setwise blocked.
\end{description}
\end{definition}

When an outcome $Y$ is setwise blocked by $Z$, the set of agents $\mathrm{N}(Z)$ improve themselves by signing contracts from $Z$ and possibly dropping some contracts of the outcome $Y$. Moreover, (a) the agents from $\mathrm{N}(Z)$ make consistent decisions on which contracts to drop, and thus the original outcome is brought into a new outcome; and (b) the new outcome is better and individually rational for the agents from $\mathrm{N}(Z)$. In particular, in the new outcome, the agents participating in the block may not obtain their best choices from the newly signed contracts and the original contracts, as illustrated by the market (\ref{exam_in2}). Setwise stability is stronger than weak setwise stability since the former prevents blocks of the above form while the latter does not. Setwise stability is independent of stability: Setwise stability prevents blocks of the above form while stability does not, whereas stability prevents blocks in which agents make non-cooperative decisions on which contracts to drop while setwise stability does not. For instance, the outcome $\{x\}$ in the market (\ref{exam_in1}) is setwise stable but not stable, while the outcome $\{y\}$ in the market (\ref{exam_in2}) is stable but not setwise stable.

Let us summarize the three stability concepts. Weak setwise stability is suitable for environments in which agents are bound to certain levels of independence and cooperativeness. In environments where agents collaborate in businesses—if they are neither independent enough to implement a block without reaching consistent decisions nor coordinated enough to give up their best choices—weak setwise stability serves as a reasonable criterion. Since stability and setwise stability are more stringent than weak setwise stability, these two stronger notions also apply to such environments, provided that stable or setwise stable outcomes exist. Moreover, stability can prevent blocks in which agents act independently: when an outcome is blocked, the agents participating in the block do not necessarily make consistent decisions about which contracts to drop. Setwise stability, on the other hand, can prevent blocks in which agents are highly coordinated: when an outcome is setwise blocked, agents may compromise on alternatives that are not their best choices. Therefore, stability applies to environments in which agents act independently, such as the problem of network formation in social media studied by RY. In contrast, setwise stability applies to environments in which agents are highly coordinated, such as markets where firms cooperate in production.

\subsection{Complementary contracts}\label{Sec_com}

Suppose an agent selects certain contracts from a set of available contracts. If additional contracts become available, and the agent drops some existing contracts while signing some of the new ones, we say that the agent has substituted the dropped contracts with the newly signed contracts. However, if any contract chosen by an agent from an available set is always chosen again as the set expands—meaning no substitutions occur—we say that contracts are complementary for this agent. That is, contracts are \textbf{complementary} for agent $i$ if $\mathrm{C}_i(Y)\subseteq \mathrm{C}_i(Y')$ for any $Y$ and $Y'$ with $Y\subset Y'\subseteq X_i$.

An individually rational outcome $Y$ is called \textbf{constrained efficient} if there is no individually rational outcome that Pareto dominates $Y$. That is, there is no individually rational outcome $Y'$ such that $Y'_i\succeq_iY_i$ for all $i\in I$ and $Y'_j\succ_jY_j$ for at least one $j\in I$. It turns out that, if contracts are complementary for all agents, outcomes satisfying different stability concepts coincide with individually rational and constrained efficient outcomes.

\begin{proposition}\label{prop_comp}
\normalfont
Let $Y\subseteq X$ be an outcome in a market in which contracts are complementary for all agents, the following four statements are equivalent.
\begin{description}
\item[(\romannumeral1)] $Y$ is individually rational and constrained efficient.
\item[(\romannumeral2)] $Y$ is stable.
\item[(\romannumeral3)] $Y$ is weakly setwise stable.
\item[(\romannumeral4)] $Y$ is setwise stable.
\end{description}
\end{proposition}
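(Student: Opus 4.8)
The plan is to establish the four equivalences through the cycle $(\romannumeral1)\Rightarrow(\romannumeral2)\Rightarrow(\romannumeral4)\Rightarrow(\romannumeral3)\Rightarrow(\romannumeral1)$. The engine driving every step is a single consequence of complementarity that I would isolate first: \emph{the union of two individually rational outcomes is again individually rational}. Indeed, if $Y$ and $Y'$ are individually rational, then for each $i$ complementarity gives $Y_i=\mathrm{C}_i(Y)\subseteq \mathrm{C}_i(Y\cup Y')$ and $Y'_i=\mathrm{C}_i(Y')\subseteq \mathrm{C}_i(Y\cup Y')$, so $(Y\cup Y')_i=Y_i\cup Y'_i\subseteq \mathrm{C}_i(Y\cup Y')\subseteq (Y\cup Y')_i$, forcing $\mathrm{C}_i(Y\cup Y')=(Y\cup Y')_i$. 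Each block type is then handled by enlarging $Y$ with the blocking contracts and invoking this fact.

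For $(\romannumeral1)\Rightarrow(\romannumeral2)$ I argue by contraposition: if the individually rational $Y$ is blocked by a nonempty $Z\subseteq X\setminus Y$ with $Z_i\subseteq \mathrm{C}_i(Y\cup Z)$ for all $i\in \mathrm{N}(Z)$, the same union computation (using $Z_i\subseteq \mathrm{C}_i(Y\cup Z)$ for $i\in \mathrm{N}(Z)$ and $(Y\cup Z)_i=Y_i$ for $i\notin \mathrm{N}(Z)$) shows $Y\cup Z$ is individually rational; since $(Y\cup Z)_i\succeq_iY_i$ for every $i$ and $(Y\cup Z)_i=Y_i\cup Z_i\supsetneq Y_i$ gives $(Y\cup Z)_i\succ_iY_i$ for $i\in \mathrm{N}(Z)\neq\emptyset$, the outcome $Y\cup Z$ Pareto dominates $Y$, contradicting constrained efficiency. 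For $(\romannumeral2)\Rightarrow(\romannumeral4)$ I show a setwise block is already an ordinary block: given a setwise block $(Z,Y^*)$, individual rationality of $Y^*$ for $i\in \mathrm{N}(Z)$ yields $Z_i\subseteq Y^*_i=\mathrm{C}_i(Y^*)$, and since $Y^*_i\subseteq (Y\cup Z)_i$ complementarity gives $\mathrm{C}_i(Y^*)\subseteq \mathrm{C}_i(Y\cup Z)$, whence $Z_i\subseteq \mathrm{C}_i(Y\cup Z)$ for all $i\in \mathrm{N}(Z)$; thus $Z$ blocks $Y$, so a stable $Y$ admits no setwise block.

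The remaining two arrows are lighter, and one needs no complementarity. For $(\romannumeral4)\Rightarrow(\romannumeral3)$ I verify that every weak setwise block is a setwise block: if $Z_i\subseteq Y^*_i=\mathrm{C}_i(Y\cup Z)$ for all $i\in \mathrm{N}(Z)$, then $Z\subseteq Y^*$, each $Y^*_i=\mathrm{C}_i(Y\cup Z)$ is the $\succ_i$-best subset of $(Y\cup Z)_i$ and hence of $Y^*_i$ itself so $Y^*$ is individually rational for these $i$, and $Y^*_i\succ_iY_i$ because $Y^*_i\succeq_iY_i$ while $\emptyset\neq Z_i\subseteq Y^*_i$ is disjoint from $Y_i$. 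Finally, for $(\romannumeral3)\Rightarrow(\romannumeral1)$ I contrapose again: a weakly setwise stable $Y$ is individually rational by definition, and if it were not constrained efficient there would be an individually rational $Y'$ Pareto dominating it; then $Z:=Y'\setminus Y$ is nonempty (otherwise $Y'\subseteq Y$ would violate individual rationality of $Y$ at the agent $j$ with $Y'_j\succ_jY_j$), and taking $Y^*:=Y\cup Y'=Y\cup Z$, the union lemma gives $Y^*_i=\mathrm{C}_i(Y\cup Z)$ with $Z_i\subseteq Y^*_i$, exhibiting a weak setwise block.

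The main obstacle is not any individual implication but getting the bookkeeping around the choice functions exactly right: the definition of complementarity compares $\mathrm{C}_i$ on nested subsets of $X_i$, so each application must first restrict to agent $i$'s own contracts and check the nesting $Y^*_i\subseteq(Y\cup Z)_i\subseteq X_i$, and the conversions between block types hinge on the elementary but easily mishandled fact that $\mathrm{C}_i(Y\cup Z)$, being the $\succ_i$-maximal subset of $(Y\cup Z)_i$, is automatically the maximal subset of any intermediate set containing it. Once the union lemma and this restriction discipline are in place, every step reduces to the two moves ``absorb the blocking contracts into $Y$'' and ``read off Pareto domination,'' with constrained efficiency serving as the common pivot linking the block-freeness conditions to the efficiency condition.
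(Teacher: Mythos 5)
Your proof is correct, including at the delicate points: the union lemma, the discipline of restricting complementarity to nested subsets of $X_i$, and the verification that $Y^*_i=\mathrm{C}_i(Y\cup Z)$ forces $Y^*_i=\mathrm{C}_i(Y^*)$. But it takes a genuinely different route from the paper. The paper's argument is hub-and-spoke with (i) as the pivot: (i)$\Rightarrow$(ii), (i)$\Rightarrow$(iii), and (i)$\Rightarrow$(iv) all follow from the single fact that, under complementarity, whenever an individually rational $Y$ is blocked, weakly setwise blocked, or setwise blocked by $Z$, the outcome $Y\cup Z$ is individually rational and Pareto dominates $Y$; and the three converse implications all follow from the fact that an individually rational outcome that is not constrained efficient is Pareto dominated by an individually rational $Y'\supset Y$ (the footnoted union fact), so that $Y'\setminus Y$ blocks it in all three senses; the paper also notes that (i)$\Leftrightarrow$(ii) is already in Rostek and Yoder. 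Your cycle (i)$\Rightarrow$(ii)$\Rightarrow$(iv)$\Rightarrow$(iii)$\Rightarrow$(i) runs on the same engine---your union lemma is precisely the paper's footnote fact---but it substitutes two block-conversion implications for four of the paper's six spokes: under complementarity every setwise block induces an ordinary block, and, with no assumption whatsoever, every weak setwise block is already a setwise block. Your version buys economy and self-containedness: complementarity enters in only three places, no appeal to Rostek--Yoder is needed, and the unconditional arrow (iv)$\Rightarrow$(iii) isolates the market-independent fact that setwise stability implies weak setwise stability, which the paper asserts in Section \ref{Sec_stable} but never uses as a proof device. The paper's arrangement buys something else: constrained efficiency stays visibly at the center of every equivalence, which is exactly the template reused by its later results (Lemma \ref{lma_IR}, Lemma \ref{lma_OSE}, Theorem \ref{thm_alg}), and each stability notion is related to (i) directly rather than through a chain of comparisons between the stability notions themselves.
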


The equivalence between (i) and (ii) follows from RY's result. We can deduce (i)$\Rightarrow$(ii), (i)$\Rightarrow$(iii), and (i)$\Rightarrow$(iv) from the following fact: In a market with complementary contracts, whenever an individually rational outcome $Y$ is blocked by $Z$, weakly setwise blocked by $Z$, or setwise blocked by $Z$, the outcome $Y\cup Z$ is individually rational and Pareto dominates $Y$. The other directions (i)$\Leftarrow$(ii), (i)$\Leftarrow$(iii), and (i)$\Leftarrow$(iv) are due to the following fact: Under the complementarity condition, if an individually rational outcome $Y$ is not constrained efficient, then it is Pareto dominated by an individually rational and constrained efficient outcome $Y'$ that is larger than $Y$: $Y\subset Y'$;\footnote{Note that, under the complementarity condition, if $Y\subseteq X$ is Pareto dominated by an individually rational outcome $Y'$ with $Y\setminus Y'\neq\emptyset$, then $Y\cup Y'$ is individually rational and Pareto dominates both $Y$ and $Y'$.} consequently, the outcome $Y$ is blocked by $Y'\setminus Y$, weakly setwise blocked by $Y'\setminus Y$, and setwise blocked by $Y'\setminus Y$.

\section{Scale economies\label{Sec_Scale}}

In this section, we propose a scale economies condition that is weaker than the complementarity condition and guarantees the existence of a weakly setwise stable outcome.

\subsection{The definition of scale economies}\label{Sec_def}

Let $Y,Y'\subseteq X_i$ be two sets of contracts signed by agent $i$ with $Y\subset Y'$. Agent $i$ chooses the set of contracts $\mathrm{C}_i(Y)$ when her available set of contracts is $Y$. If agent $i$'s available set of contracts expands from $Y$ to $Y'$, and the agent drops contracts from $\mathrm{C}_i(Y)\setminus \mathrm{C}_i(Y')$ and sign contracts from $\mathrm{C}_i(Y')\setminus \mathrm{C}_i(Y)$, we say that agent $i$ uses the newly signed contracts from $\mathrm{C}_i(Y')\setminus \mathrm{C}_i(Y)$ to substitute contracts from $\mathrm{C}_i(Y)\setminus \mathrm{C}_i(Y')$. We assume scale economies in the sense that an agent substitutes contracts with other contracts only if the newly signed contracts involve a weakly larger set of partners.

\begin{definition}
\normalfont
Agent $i$'s preference exhibits \textbf{scale economies} if $\mathrm{N}(\mathrm{C}_i(Y)\setminus \mathrm{C}_i(Y'))\subseteq \mathrm{N}(\mathrm{C}_i(Y')\setminus \mathrm{C}_i(Y))$ for any two sets of contracts $Y$ and $Y'$ with $Y\subset Y'\subseteq X_i$. A market exhibits scale economies if each agent's preference exhibits scale economies.
\end{definition}

The markets (\ref{exam_in1}) and (\ref{exam_in2}) exhibit scale economies since all contracts are signed by the same agents. Recall that, when an outcome is weakly setwise blocked or setwise blocked, the agents participating in the block make consistent decisions about which contracts to drop. As a result, the outcome transitions into a new outcome that is both better and individually rational for the agents involved in the block. This characteristic implies the following lemma.

\begin{lemma}\label{lma_IR}
\normalfont
In a market with scale economies, an individually rational outcome cannot be weakly setwise blocked if it is constrained efficient.
\end{lemma}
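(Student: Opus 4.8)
The plan is to argue by contraposition: I will assume that an individually rational outcome $Y$ is weakly setwise blocked by some nonempty $Z\subseteq X\setminus Y$ and construct an individually rational outcome that Pareto dominates $Y$, contradicting constrained efficiency. The natural candidate is the outcome $Y^{**}=Y^*\cup V$ already identified in the text, where $Y^*\subseteq Y\cup Z$ satisfies $Z_i\subseteq Y^*_i=\mathrm{C}_i(Y\cup Z)$ for all $i\in\mathrm{N}(Z)$, and $V=\{x\in Y\mid \mathrm{N}(x)\cap\mathrm{N}(Z)=\emptyset\}$. Throughout I will use that $\mathrm{C}_i(Y)$ depends only on $Y_i$ and is the $\succ_i$-maximal subset of $Y_i$.

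The first step records the behaviour of the blocking agents. For $i\in\mathrm{N}(Z)$ we have $V_i=\emptyset$, so $Y^{**}_i=Y^*_i=\mathrm{C}_i(Y\cup Z)$. Since $Y_i\subseteq(Y\cup Z)_i$ and $\mathrm{C}_i(Y\cup Z)$ is best among subsets of $(Y\cup Z)_i$, we get $Y^{**}_i\succeq_iY_i$; and because $Z_i\subseteq Y^{**}_i$ with $\emptyset\neq Z_i$ and $Z_i\cap Y_i=\emptyset$, this cannot be an equality, so strictness of $\succ_i$ yields $Y^{**}_i\succ_iY_i$. Hence every agent in $\mathrm{N}(Z)$ is strictly better off.

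The crux, and the step I expect to be the main obstacle, is to show that the agents outside the block are left exactly where they were, i.e. $Y^{**}_i=Y_i$ for $i\notin\mathrm{N}(Z)$. Without any restriction on preferences, an agent $j\in\mathrm{N}(Z)$ re-optimizing to $\mathrm{C}_j(Y\cup Z)$ could drop a contract $x$ that it shares with an outside agent $i$, which would strictly harm $i$ and destroy Pareto dominance. This is exactly where scale economies enters. Applying the condition to $Y_j\subset(Y\cup Z)_j$ (proper since $Z_j\neq\emptyset$ and $Z_j\cap Y_j=\emptyset$), and using individual rationality $\mathrm{C}_j(Y)=Y_j$ together with $\mathrm{C}_j(Y\cup Z)=Y^*_j$, one obtains $\mathrm{N}(Y_j\setminus Y^*_j)\subseteq\mathrm{N}(Y^*_j\setminus Y_j)=\mathrm{N}(Z_j)\subseteq\mathrm{N}(Z)$. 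Thus any contract a blocking agent drops has all of its signatories inside $\mathrm{N}(Z)$. Consequently, for $i\notin\mathrm{N}(Z)$ and $x\in Y_i$, either $\mathrm{N}(x)\cap\mathrm{N}(Z)=\emptyset$, so $x\in V\subseteq Y^{**}$, or $\mathrm{N}(x)\cap\mathrm{N}(Z)\neq\emptyset$, in which case the containment above forbids $x$ from being dropped by any $j\in\mathrm{N}(x)\cap\mathrm{N}(Z)$, forcing $x\in Y^*\subseteq Y^{**}$. Combined with the easy reverse inclusion $Y^{**}_i\subseteq Y_i$ (a contract of $Y^{**}$ through an outside agent cannot lie in $Z$, hence lies in $Y$), this gives $Y^{**}_i=Y_i$.

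It remains to verify that $Y^{**}$ is individually rational, after which Pareto dominance over $Y$, strict for agents in $\mathrm{N}(Z)$ and equality elsewhere, contradicts constrained efficiency. For $i\notin\mathrm{N}(Z)$ this is immediate since $Y^{**}_i=Y_i=\mathrm{C}_i(Y)$ and $\mathrm{C}_i$ depends only on the contracts through $i$. For $i\in\mathrm{N}(Z)$ I will use idempotence of the choice function: since $Y^{**}_i=\mathrm{C}_i(Y\cup Z)$ is already the $\succ_i$-best subset of $(Y\cup Z)_i$, every subset of $Y^{**}_i$ is weakly worse, so $Y^{**}_i$ is the best subset of itself and $\mathrm{C}_i(Y^{**})=Y^{**}_i$. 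This completes the contradiction, and hence the lemma.
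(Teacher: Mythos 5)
Your proof is correct and follows essentially the same route as the paper: both construct $Y^{**}=Y^*\cup V$, apply scale economies to each blocking agent $j$ with $Y_j\subset(Y\cup Z)_j$ to conclude that dropped contracts have all signatories inside $\mathrm{N}(Z)$, and then contradict constrained efficiency via the individually rational Pareto-dominating outcome $Y^{**}$. The only difference is organizational—you invoke scale economies in its positive form to rule out dropped outside contracts upfront, whereas the paper splits into cases and derives a violation of scale economies in the case where an outside contract is dropped—and your version also absorbs the paper's separate $\mathrm{N}(Z)=I$ case into the general argument.
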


To see the role of scale economies, consider an individually rational and constrained efficient outcome for Ana, Bob, and Carol. Suppose this outcome is weakly setwise blocked. The block cannot involve all three agents; otherwise, the three agents would deviate to an individually rational outcome that Pareto dominates the original outcome. Therefore, the block must involve only two agents—say, Ana and Bob. We then consider whether Ana and Bob must drop any contracts signed with Carol as part of the block. If they do not drop any contracts signed with Carol, the block would again result in an individually rational outcome that Pareto dominates the original outcome. This implies that the blocking coalition must drop some contracts signed with Carol. Since the original outcome is individually rational, the dropped contracts must be substituted by new contracts signed by Ana and Bob. However, these new contracts are not signed by a weakly larger set of agents, as Carol is excluded from the new agreements.

Lemma \ref{lma_IR} implies the existence of a weakly setwise stable outcome in a market with scale economies since there always exists an individually rational outcome that is constrained efficient. The converse of Lemma \ref{lma_IR} does not hold: A weakly setwise stable outcome is not necessarily constrained efficient in a market with scale economies. This is shown by the market (\ref{exam_in2}). This market exhibits scale economies since all contracts are signed by the two agents. The outcome $\{y\}$ is weakly setwise stable, but the outcome $\{x,z\}$ is individually rational and Pareto dominates $\{y\}$. Recall that $\{y\}$ is not weakly setwise blocked by $\{x,z\}$ since the contract $z$ does not belong to $\mathrm{C}_{i_1}(\{x,y,z\})=\{x,y\}$.

Now, we introduce two special cases of the scale economies condition.

\begin{definition}\label{Def_special}
\normalfont
\begin{description}
  \item[(i)] \textbf{Contracts signed by different groups are complementary} for agent $i$ if for any $Y,Z\subseteq X_i$ satisfying $\mathrm{C}_i(Y)=Y$ and $Y\cap Z=\emptyset$, $x\in \mathrm{C}_i(Y\cup Z)$ if $x\in \mathrm{C}_i(Y)$ and $\mathrm{N}(z)\neq \mathrm{N}(x)$ for all $z\in Z$.
  \item[(ii)] Agent $i$'s preference exhibits \textbf{single-contract scale economies} if $x\in \mathrm{C}_i(Y)$ and $x\notin \mathrm{C}_i(Y\cup\{y\})$ implies $\mathrm{N}(x)\subseteq \mathrm{N}(y)$ for any $Y\subseteq X_i$, $x\in Y$, and $y\in X_i\setminus Y$. 
\end{description}
\end{definition}

Contracts signed by different groups are complementary for agent $i$ if, given an available set of contracts that is individually rational for agent $i$,   any contract $x$ that belongs to this set will be chosen by agent $i$ when contracts signed by groups different from $\mathrm{N}(x)$ are added into the available set. This is a special case of scale economies since substitutions only arise among contracts signed by the same set of agents.

\begin{lemma}\label{lma_special}
\normalfont
Agent $i$'s preference exhibits scale economies if
\begin{description}
  \item[(i)] contracts signed by different groups are complementary for agent $i$, or
  \item[(ii)] agent $i$'s preference exhibits single-contract scale economies.
\end{description}
\end{lemma}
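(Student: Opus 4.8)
The plan is to treat the two sufficient conditions separately, but in both cases I would first record a monotonicity property of the choice function that follows purely from strictness of $\succ_i$: if $\mathrm{C}_i(Y')=C$ and $C\subseteq W\subseteq Y'$, then $\mathrm{C}_i(W)=C$. Indeed $C$ is feasible from $W$ and is the $\succ_i$-best subset of the larger set $Y'$, so it is also the best subset of $W$, and strictness makes it the unique maximizer. Two consequences I will use repeatedly are $\mathrm{C}_i(\mathrm{C}_i(Y))=\mathrm{C}_i(Y)$ (take $W=\mathrm{C}_i(Y)$) and, for $A=\mathrm{C}_i(Y)$ and $C=\mathrm{C}_i(Y')$ with $Y\subset Y'$, the identity $\mathrm{C}_i(A\cup C)=C$ (take $W=A\cup C$, which is valid since $C\subseteq A\cup C\subseteq Y'$).

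For part (i), fix $Y\subset Y'\subseteq X_i$ and write $A=\mathrm{C}_i(Y)$, $C=\mathrm{C}_i(Y')$; I want $\mathrm{N}(A\setminus C)\subseteq\mathrm{N}(C\setminus A)$. By the preliminary fact $\mathrm{C}_i(A)=A$ and $\mathrm{C}_i(A\cup C)=C$, so I can apply the ``contracts signed by different groups are complementary'' hypothesis with the self-chosen set $A$ in the role of $Y$ and $Z=C\setminus A$ (which is disjoint from $A$). The hypothesis says that any $x\in A$ with $\mathrm{N}(z)\neq\mathrm{N}(x)$ for all $z\in C\setminus A$ must lie in $\mathrm{C}_i(A\cup C)=C$. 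Reading this contrapositively, every $x\in A\setminus C$ admits some $z\in C\setminus A$ with $\mathrm{N}(z)=\mathrm{N}(x)$, whence $\mathrm{N}(x)=\mathrm{N}(z)\subseteq\mathrm{N}(C\setminus A)$. Taking the union over $x\in A\setminus C$ gives the claim; this part is essentially immediate once the two identities are in place.

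For part (ii), I would prove the stronger statement that for $Y\subset Y'$ and every $x\in\mathrm{C}_i(Y)\setminus\mathrm{C}_i(Y')$ there is a single newly signed contract $w\in\mathrm{C}_i(Y')\setminus\mathrm{C}_i(Y)$ with $\mathrm{N}(x)\subseteq\mathrm{N}(w)$, which clearly implies scale economies. The base case $Y'=Y\cup\{y\}$ combines two observations: single-contract scale economies gives $\mathrm{N}(x)\subseteq\mathrm{N}(y)$ for each dropped $x$, and the preliminary monotonicity fact forces $y\in\mathrm{C}_i(Y\cup\{y\})$ whenever anything is dropped (otherwise $\mathrm{C}_i(Y\cup\{y\})\subseteq Y$ would force $\mathrm{C}_i(Y\cup\{y\})=\mathrm{C}_i(Y)$, so nothing is dropped). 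I then induct on $|Y'\setminus Y|$: peel off one contract $y$, set $\hat Y=Y'\setminus\{y\}$, $A=\mathrm{C}_i(Y)$, $B=\mathrm{C}_i(\hat Y)$, $C=\mathrm{C}_i(Y')$, and for $x\in A\setminus C$ chase it through the chain $A\to B\to C$ using the inductive hypothesis on $(Y,\hat Y)$ and the base case on $(\hat Y,Y')$.

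The main obstacle is exactly this chaining: a contract dropped between $Y$ and $\hat Y$ may be re-added at the last step, or a contract surviving into $B$ may be dropped only at the last step, so a naive composition of the two partner-set inclusions $\mathrm{N}(A\setminus B)\subseteq\mathrm{N}(B\setminus A)$ and $\mathrm{N}(B\setminus C)\subseteq\mathrm{N}(C\setminus B)$ does not close up. Strengthening the inductive claim to produce a single containing contract $w$ rather than a mere inclusion of partner sets is what resolves this, since it lets me invoke transitivity of $\subseteq$: if the inductive hypothesis gives $w\in B\setminus A$ with $\mathrm{N}(x)\subseteq\mathrm{N}(w)$ and this $w$ is in turn dropped at the last step, the base case supplies $y\in C\setminus A$ with $\mathrm{N}(w)\subseteq\mathrm{N}(y)$, so $\mathrm{N}(x)\subseteq\mathrm{N}(y)$; the remaining cases (where $x\in B$, or where $w\in C$) are handled the same way, with $y$ or $w$ itself serving as the required witness.
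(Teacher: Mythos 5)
Your proof is correct. Part (i) is essentially the paper's own argument: the paper likewise applies the different-groups complementarity hypothesis to the pair $(\mathrm{C}_i(Y),\,\mathrm{C}_i(Y')\setminus\mathrm{C}_i(Y))$ and uses the identity $\mathrm{C}_i(\mathrm{C}_i(Y)\cup\mathrm{C}_i(Y'))=\mathrm{C}_i(Y')$, merely phrased as a contradiction for a fixed agent $j$ rather than contract by contract.

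For part (ii) you take a genuinely different, and somewhat more laborious, route. The paper does not induct on $|Y'\setminus Y|$: it first replaces the pair $(Y,Y')$ by $(\mathrm{C}_i(Y),\,\mathrm{C}_i(Y)\cup\mathrm{C}_i(Y'))$ --- legitimate by the same monotonicity fact you record --- and then adds the contracts $x^1,\ldots,x^k$ of $\mathrm{C}_i(Y')\setminus\mathrm{C}_i(Y)$ one at a time on top of $\mathrm{C}_i(Y)$, i.e.\ it interpolates through \emph{chosen} sets rather than through the raw sets $Y\subset Y'$. Along that path every contract that can trigger a drop is, by construction, an element of $\mathrm{C}_i(Y')\setminus\mathrm{C}_i(Y)$, so it is automatically a valid witness: any $x\in\mathrm{C}_i(Y)\setminus\mathrm{C}_i(Y')$ must be dropped at some step $j$ (since it is chosen at step $0$ and not at step $k$), and single-contract scale economies at that step gives $\mathrm{N}(x)\subseteq\mathrm{N}(x^{j+1})$ directly. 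This dissolves exactly the obstacle you identified --- intermediate witnesses failing to survive, or dropped contracts being re-added --- with no strengthened inductive hypothesis: re-addition along the paper's path is harmless because the trigger's membership in $\mathrm{C}_i(Y')\setminus\mathrm{C}_i(Y)$ does not depend on whether it is still chosen at the end. Your fix, strengthening the claim to produce a single containing contract $w$ and chaining inclusions by transitivity, is a valid alternative resolution of the same difficulty (and the single-witness property you prove is implicitly delivered by the paper's argument too); the paper's device of passing to chosen sets first is simply the more economical of the two.
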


The next special case requires that if a contract $x$ is dropped by an agent as another contract $y$ becomes available, the contract $y$ must be signed by a weakly larger set of agents than $x$. This condition is stronger than scale economies. For example, consider Ana's preference ordering $\{y,z\}\succ\{x\}\succ\emptyset$. Ana, Bob, and Carol sign the contract $x$; Ana and Bob sign the contract $y$; and the contract $z$ is signed by Ana and Carol. This preference exhibits scale economies: Ana may replace $\{x\}$ with $\{y,z\}$ while the two sets are signed by the same set of agents. On the other hand, Ana's preference fails single-contract scale economies: Ana chooses $x$ from $\{x,z\}$ but drops $x$ as $y$ becomes available, but $y$ is not signed by a weakly larger set of partners than $x$.

\subsection{A Constrained Serial Dictatorship algorithm}\label{Sec_Alg}

Under the complementarity condition, a stable outcome can be found by a novel one-sided Deferred Acceptance algorithm proposed by RY, even when externalities and indifferences are present. In this section, we introduce a \textbf{Constrained Serial Dictatorship} (CSD) algorithm, which generates an individually rational and constrained efficient outcome within our framework. 

The CSD algorithm begins by exogenously determining an ordering of the agents. Then, following this order, the algorithm sequentially introduces one agent to the market at each step. Each agent introduced to the market adds contracts to a pool. These contracts are signed by her but not by the agents who were brought to the market before her. The agent maximizes her welfare from the contracts in the pool provided that the set of contracts in the pool is part of an individually rational outcome. 

Let $\mathcal{A}$ be the collection of all individually rational outcomes, that is,
\begin{equation*}
\mathcal{A}\equiv\{Y\subseteq X|Y_i=\mathrm{C}_i(Y_i) \text{ for all } i\in I\}.
\end{equation*}
Suppose the agents are ordered as $i_1$, $i_2, \ldots, i_{|I|}$, then the algorithm proceeds as follows.

\medskip
\begin{description}
\item[Step 1.] Agent $i_1$ chooses her favorite set of contracts from $X_{i_1}$ among those in which each set is part of an individually rational outcome. That is, agent $i_1$ chooses $Z$ that solves the problem
\begin{equation*}
\max_{\succ_{i_1}}\{Z\subseteq X_{i_1}|Z\subseteq Y \text{ for some } Y\in\mathcal{A}\}
\end{equation*}
Let $Z^1$ be the solution to this problem, and let $Y^1\equiv Z^1$.
\item[Step $k, 2\leq k\leq|I|-1$.] Agent $i_k$ chooses a set of contracts $Z$ from $X_{i_k}\setminus X_{\{i_1,\ldots,i_{k-1}\}}$ such that, among those in which $Y^{k-1}\cup Z$ is part of an individually rational outcome, she prefers $Y_{i_k}^{k-1}\cup Z$ most. That is, agent $i_k$ chooses $Z$ that solves the problem
\begin{equation*}
\max_{\succ_{i_k}}\{Y_{i_k}^{k-1}\cup Z|Z\subseteq X_{i_k}\setminus X_{\{i_1,\ldots,i_{k-1}\}} \text{ and } Y^{k-1}\cup Z\subseteq Y \text{ for some } Y\in\mathcal{A}\}
\end{equation*}
Let $Z^k$ be the solution to this problem, and let $Y^k\equiv Y^{k-1}\cup Z^k$.
\end{description}

The algorithm stops at Step $|I|-1$ and produces $Y^{|I|-1}$. Notice that we do not need the $|I|$th agent to choose contracts. For instance, consider a market with three agents $i_1,i_2$, and $i_3$. The agents $i_1$ and $i_2$ can sign contract $x$; the agents $i_1$ and $i_3$ can sign contract $y$; and the agents $i_2$ and $i_3$ can sign contract $z$. The three agents can sign contract $u$. The agents have the preferences
\begin{equation}\label{exam_alg}
\begin{aligned}
&i_1 : \{x,y,u\}\succ\{x,u\}\succ\emptyset,\\
&i_2 : \{x,z,u\}\succ\{x,u\}\succ\emptyset,\\
&i_3 : \{z,u\}\succ\{y,z\}\succ\emptyset.
\end{aligned}
\end{equation}
The CSD algorithm for the ordering $i_1\rightarrow i_2\rightarrow i_3$ proceeds as follows.

Step 1. The agent $i_1$ adds the contracts $x$ and $u$ to the pool. Notice that she cannot add all three contracts $x$, $y$, and $u$ since $\{x,y,u\}$ is not part of an individually rational outcome: Agent $i_3$ does not sign $y$ and $u$ simultaneously.

Step 2. The agent $i_2$ adds the contract $z$ to the pool. The algorithm terminates and produces the outcome $\{x,z,u\}$.

\begin{lemma}\label{lma_alg}
\normalfont
The CSD algorithm produces an individually rational and constrained efficient outcome.
\end{lemma}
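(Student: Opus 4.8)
The plan is to verify the two properties separately, first individual rationality and then constrained efficiency, using at each step the maximality built into the algorithm. Throughout, write $Y^{*}\equiv Y^{|I|-1}$ for the output, and record two elementary bookkeeping facts. First, once agent $i_k$ has acted her contracts are frozen: every contract added at a later step $l>k$ lies in $X_{i_l}\setminus X_{\{i_1,\ldots,i_{l-1}\}}$ and hence does not involve $i_k$, so $Y^{*}_{i_k}=Y^{k}_{i_k}=Y^{k-1}_{i_k}\cup Z^k$ for $k\leq |I|-1$. Second, every contract in $Y^{k}$ involves at least one of $i_1,\ldots,i_k$, since each $Z^l$ consists of contracts signed by $i_l$.

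The heart of the individual rationality argument is the following claim, which I would prove by induction on $k$: every individually rational outcome $\hat Y$ with $Y^{k}\subseteq\hat Y$ satisfies $\hat Y_{i_j}=Y^{k}_{i_j}$ for all $j\leq k$. For $k=1$, note $\hat Y_{i_1}=\mathrm{C}_{i_1}(\hat Y_{i_1})$ is part of the individually rational outcome $\hat Y$, hence a feasible candidate in Step 1; maximality of $Z^1$ gives $Z^1\succeq_{i_1}\hat Y_{i_1}$, while $Z^1\subseteq\hat Y_{i_1}$ forces $\hat Y_{i_1}\succeq_{i_1}Z^1$, so $\hat Y_{i_1}=Z^1$. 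For the inductive step, the hypothesis applied to $\hat Y\supseteq Y^{k-1}$ yields $\hat Y_{i_j}=Y^{k-1}_{i_j}=Y^{k}_{i_j}$ for $j<k$. The key point is then that any $x\in\hat Y_{i_k}\setminus Y^{k-1}_{i_k}$ cannot involve a processed agent $i_j$ ($j<k$): otherwise $x\in\hat Y_{i_j}=Y^{k-1}_{i_j}\subseteq Y^{k-1}$, and as $x$ involves $i_k$ we would get $x\in Y^{k-1}_{i_k}$, a contradiction. Hence $\hat Y_{i_k}\setminus Y^{k-1}_{i_k}\subseteq X_{i_k}\setminus X_{\{i_1,\ldots,i_{k-1}\}}$, so $Z\equiv\hat Y_{i_k}\setminus Y^{k-1}_{i_k}$ is a feasible choice for $i_k$ in Step $k$ (the required witness outcome being $\hat Y$ itself), delivering the value $\hat Y_{i_k}$. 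Maximality of $Z^k$ gives $Y^{k}_{i_k}\succeq_{i_k}\hat Y_{i_k}$, and $Y^{k}_{i_k}\subseteq\hat Y_{i_k}=\mathrm{C}_{i_k}(\hat Y_{i_k})$ gives the reverse, so $\hat Y_{i_k}=Y^{k}_{i_k}$. Individual rationality of $Y^{*}$ then follows: $Y^{*}$ is part of some $\hat Y\in\mathcal{A}$ by the Step $|I|-1$ constraint, the claim gives $\hat Y_{i_j}=Y^{*}_{i_j}$ for all $j\leq |I|-1$, and since every contract meets some $i_j$ with $j\leq |I|-1$, we get $\hat Y=\bigcup_{j\leq|I|-1}\hat Y_{i_j}=Y^{*}$, so $Y^{*}\in\mathcal{A}$.

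For constrained efficiency I would argue by contradiction: suppose an individually rational $Y'$ Pareto dominates $Y^{*}$, and let $i_m$ be the first agent in the ordering with $Y'_{i_m}\neq Y^{*}_{i_m}$, so that $Y'_{i_j}=Y^{*}_{i_j}$ for $j<m$ and, by strictness, $Y'_{i_m}\succ_{i_m}Y^{*}_{i_m}$. The bookkeeping shows $m\neq|I|$: every contract involving $i_{|I|}$ also involves some earlier agent, so $Y'_{i_{|I|}}$ and $Y^{*}_{i_{|I|}}$ are both determined by the coinciding contracts of $i_1,\ldots,i_{|I|-1}$ and must be equal. For $m\leq|I|-1$ I would show $Y'_{i_m}$ is a feasible value for $i_m$ in Step $m$, contradicting maximality of $Y^{m}_{i_m}=Y^{*}_{i_m}$. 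Concretely, $Y^{m-1}_{i_m}\subseteq Y'_{i_m}$ because each contract of $Y^{m-1}_{i_m}$ involves some $i_l$ with $l<m$ and hence lies in $Y'_{i_l}=Y^{*}_{i_l}$; the added part $Y'_{i_m}\setminus Y^{m-1}_{i_m}$ lies in $X_{i_m}\setminus X_{\{i_1,\ldots,i_{m-1}\}}$ by the same ``no processed agent'' argument; and $Y^{m-1}\cup Y'_{i_m}\subseteq Y'$, so the configuration is part of the individually rational outcome $Y'$. Thus $Y'_{i_m}$ is attainable in Step $m$, giving $Y^{*}_{i_m}\succeq_{i_m}Y'_{i_m}$ and contradicting $Y'_{i_m}\succ_{i_m}Y^{*}_{i_m}$.

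The part I expect to require the most care is the induction underlying individual rationality, specifically the worry that $i_k$ might be saddled with ``forced'' contracts in $Y^{k-1}_{i_k}$ that she would rather drop, which would spoil the claim that $Y^{k}_{i_k}=\mathrm{C}_{i_k}(Y^{k}_{i_k})$. The resolution is precisely the inductive claim: any individually rational extension $\hat Y$ of $Y^{k}$ is forced to agree with the algorithm on all previously processed agents, so the contracts $i_k$ could conceivably use to repair or improve her bundle are automatically among those she is permitted to add (they avoid $i_1,\ldots,i_{k-1}$), and optimality then pins her bundle down. Once this ``freezing'' is established, both conclusions reduce to comparing a hypothesized better bundle against the Step-$k$ maximand.
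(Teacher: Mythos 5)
Your proof is correct and follows essentially the same approach as the paper's own argument: a sequential induction along the agent ordering showing that any individually rational extension (or Pareto-dominating individually rational outcome) is forced, step by step, to agree with the algorithm's output, using the maximality and feasibility constraints of each Step $k$. In fact, your ``freezing'' claim makes rigorous the individual-rationality part, which the paper only asserts in one sentence, so your write-up is, if anything, more complete than the paper's sketch.
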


Notice that this lemma does not rely on scale economies. The algorithm produces an individually rational outcome since the agents in the algorithm maintain the contracts in the pool to be part of an individually rational outcome. Suppose there is an individually rational outcome $Y$ that Pareto dominates $Y^{|I|-1}$. According to Step 1 of the algorithm, the outcome $Y$ cannot improve agent $i_1$'s welfare, and thus $Y_{i_1}=Y^{|I|-1}_{i_1}$. Then, Step 2 of the algorithm implies that the outcome $Y$ cannot improve agent $i_2$'s welfare either, and hence we also have $Y_{i_2}=Y^{|I|-1}_{i_2}$. We can finally reach the conclusion $Y=Y^{|I|-1}$, which indicates that $Y^{|I|-1}$ is individually rational and constrained efficient. Therefore, by Lemma \ref{lma_IR}, the outcome $Y^{|I|-1}$ is weakly setwise stable in a market with scale economies.

\begin{theorem}\label{thm_alg}
\normalfont
The CSD algorithm produces a weakly setwise stable outcome in a market with scale economies.
\end{theorem}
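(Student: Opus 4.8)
The plan is to obtain the theorem as an immediate consequence of the two lemmas that precede it, so that essentially no new argument is needed beyond invoking them in the right order. First I would appeal to Lemma \ref{lma_alg}, which guarantees that the CSD algorithm terminates with an outcome $Y^{|I|-1}$ that is individually rational and constrained efficient. It is worth emphasizing that this step does not use the scale economies hypothesis at all: individual rationality is built into the algorithm, since at every step each agent keeps the pool inside some member of $\mathcal{A}$, and constrained efficiency follows from the serial argument that no individually rational outcome can strictly improve $i_1$, then $i_2$, and so on down the ordering, which forces any purported dominating outcome to coincide with $Y^{|I|-1}$.

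Second, I would invoke Lemma \ref{lma_IR}, whose hypothesis is precisely that the market exhibits scale economies. That lemma states that an individually rational and constrained efficient outcome cannot be weakly setwise blocked. Applying it to $Y^{|I|-1}$ yields that $Y^{|I|-1}$ is individually rational and admits no weakly setwise block, which is exactly the definition of weak setwise stability (Definition \ref{Def_weak}). This completes the proof.

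Since the two lemmas carry all of the weight, the only genuine bookkeeping in assembling the theorem is to confirm that the outcome delivered by Lemma \ref{lma_alg} is exactly the object to which the hypothesis of Lemma \ref{lma_IR} applies, and that both the individual rationality and the constrained efficiency produced by the algorithm survive intact to feed into Lemma \ref{lma_IR}. The substantive difficulty of the whole development lies not here but inside Lemma \ref{lma_IR}, where scale economies is genuinely used: the argument sketched after its statement shows that a blocking coalition can neither leave every outsider's contracts untouched, for then the resulting outcome would be individually rational and would Pareto dominate $Y^{|I|-1}$, contradicting constrained efficiency, nor drop some outsider's contract, for then, by individual rationality, that contract would have to be substituted by new contracts signed by a weakly larger set of partners, contradicting scale economies since the excluded outsider is a signatory of the dropped contract but not of the new ones. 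If I had to reprove the theorem from scratch rather than citing the lemmas, reconstructing this case analysis, especially pinning down that the dropped and newly signed contracts stand in exactly the substitution relation appearing in the definition of scale economies, would be the delicate part; the assembly step above is routine.
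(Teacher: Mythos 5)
Your proposal is correct and matches the paper's own argument exactly: the paper likewise obtains Theorem \ref{thm_alg} by combining Lemma \ref{lma_alg} (the CSD output is individually rational and constrained efficient, with no appeal to scale economies) with Lemma \ref{lma_IR} (under scale economies, such an outcome cannot be weakly setwise blocked). Your identification of where the substantive work lies—inside Lemma \ref{lma_IR}—also agrees with the paper's presentation.
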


Since agent ordering is exogenously determined in the CSD algorithm, the choice of execution sequence inherently creates discriminations among agents. To address this fairness concern in practical applications, practitioners may randomize the agent ordering during implementation.

\subsection{Application: An agent-target market}\label{Sec_app}

In this section, we introduce an \textbf{agent-target market} in which agents cooperate in achieving targets. A contract signed by a group of agents consists of indivisible elementary cooperations among these agents. A target of an agent requires some indivisible elementary cooperations involving this agent. We show that contracts signed by different groups are complementary in this market.

Let $E$ be a set of (indivisible) elementary cooperations in which each cooperation $e\in E$ is implemented by some agents. We abuse the notation to denote by $\mathrm{N}(e)$ the set of agents that implement $e$. For each subset of elementary cooperations $E'\subseteq E$ and each agent $i\in I$, let $E'_i\equiv\{e\in E' | i\in \mathrm{N}(e)\}$ be the subset of $E'$ in which each elementary cooperation involves the agent $i$. Each contract $x\in X$ with $\mathrm{N}(x)=J$ is a nonempty subset of elementary cooperations implemented by members of $J$: $x\in \{E'\subseteq E | E'\neq\emptyset \text{ and } \mathrm{N}(e)=J \text{ for each } e\in E'\}$ if $x\in X$ with $\mathrm{N}(x)=J$. Namely, each contract consists of elementary cooperations implemented by signatories of the contract.

An agent $i\in I$ wants to achieve targets from $T_i$, where each target $t\in T_i$ of agent $i$ requires a nonempty set of elementary cooperations $E^t_i\subseteq E_i$ that involve agent $i$. For any $Y\subseteq X_i$, let $\mathrm{T}_i(Y)\equiv\{t\in T_i | E_i^t\subseteq\bigcup_{x\in Y}x\}$ be the set of targets that can be achieved given the elementary cooperations included in the contracts of $Y$. 

Each agent has a strict preference ordering $\succ_i$ over $2^{X_i}$.  We assume (i) $Y\succ_iY'$ for any $Y,Y'\subseteq X_i$ satisfying $\mathrm{T}_i(Y')\subset \mathrm{T}_i(Y)$; that is, an agent prefers the set of contracts that can achieve a larger set of targets; and (ii) $Y\succ_iY'$ for any $Y,Y'\subseteq X_i$ satisfying $\mathrm{T}_i(Y')=\mathrm{T}_i(Y)$ and $Y\subset Y'$; that is, when two sets of contracts achieve the same set of targets but one of them contains more contracts, the agent prefers the set with less contracts.

\begin{proposition}\label{prop_app}
\normalfont
Contracts signed by different groups are complementary in the agent-target market.
\end{proposition}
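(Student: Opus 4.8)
The plan is to first pin down the shape of the choice function $\mathrm{C}_i$ in the agent-target market from the two preference assumptions, and then to exploit the structural fact that all elementary cooperations inside a single contract share the same signatory set. Unwinding Definition \ref{Def_special}(i), I fix an agent $i$ and sets $Y,Z\subseteq X_i$ with $\mathrm{C}_i(Y)=Y$ and $Y\cap Z=\emptyset$, fix a contract $x\in Y$ with $\mathrm{N}(z)\neq \mathrm{N}(x)$ for every $z\in Z$, and must show $x\in \mathrm{C}_i(Y\cup Z)$.

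I would first record two facts about $\mathrm{C}_i$ that follow directly from assumptions (i) and (ii). Fact A: for any $W\subseteq X_i$ the chosen set achieves exactly the achievable targets, $\mathrm{T}_i(\mathrm{C}_i(W))=\mathrm{T}_i(W)$; otherwise $\mathrm{T}_i(\mathrm{C}_i(W))\subset \mathrm{T}_i(W)$ and assumption (i) would give $W\succ_i \mathrm{C}_i(W)$, contradicting $\mathrm{C}_i(W)\succeq_i W$. Fact B: since $\mathrm{C}_i(Y)=Y$, every contract in $Y$ is essential for some target, i.e.\ $\mathrm{T}_i(Y\setminus\{x\})\subset \mathrm{T}_i(Y)$ for each $x\in Y$; otherwise $\mathrm{T}_i(Y\setminus\{x\})=\mathrm{T}_i(Y)$ and assumption (ii) would give $Y\setminus\{x\}\succ_i Y$, contradicting $\mathrm{C}_i(Y)=Y$.

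With these in hand, I apply Fact B to $x$ to obtain a target $t\in \mathrm{T}_i(Y)$ that is lost when $x$ is dropped, which yields an elementary cooperation $e\in E_i^t$ lying in $x$ but in no other contract of $Y$. The decisive structural observation is that any contract containing $e$ must have signatory set $\mathrm{N}(e)=\mathrm{N}(x)$, because by construction all cooperations in a contract $w$ satisfy $\mathrm{N}(\cdot)=\mathrm{N}(w)$. Now $t\in \mathrm{T}_i(Y)\subseteq \mathrm{T}_i(Y\cup Z)$, so Fact A gives $t\in \mathrm{T}_i(\mathrm{C}_i(Y\cup Z))$, whence $e$ is supplied by some $w\in \mathrm{C}_i(Y\cup Z)\subseteq Y\cup Z$. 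This $w$ satisfies $\mathrm{N}(w)=\mathrm{N}(x)$; by hypothesis no contract of $Z$ carries this signatory set, so $w\in Y$; and $e$ lies in no contract of $Y$ other than $x$, forcing $w=x$. Hence $x\in \mathrm{C}_i(Y\cup Z)$, as desired.

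The step I expect to be the crux is the linkage between essentiality and the signatory set: the hypothesis $\mathrm{N}(z)\neq \mathrm{N}(x)$ becomes useful only once one knows that the cooperation $e$ witnessing $x$'s essentiality can be re-supplied exclusively by contracts of the same group $\mathrm{N}(x)$. Establishing that $e$ is unique to $x$ within $Y$ (not merely that $x$ is needed) and that any alternative supplier must share $x$'s signatory set is where the contract/cooperation structure does the real work; the remainder is bookkeeping with Facts A and B.
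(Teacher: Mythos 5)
Your proof is correct and takes essentially the same route as the paper's: both rest on extracting, via assumption (ii), a ``private'' elementary cooperation $e\in x$ needed for an achievable target, then using assumption (i) to show the choice from $Y\cup Z$ must achieve that target, and finally invoking the fact that any contract containing $e$ has signatory set $\mathrm{N}(e)=\mathrm{N}(x)$ together with $\mathrm{N}(z)\neq\mathrm{N}(x)$ for $z\in Z$ to force $x\in\mathrm{C}_i(Y\cup Z)$. The only difference is organizational: the paper runs the optimality argument as a contradiction at the end, whereas you isolate it upfront as Fact A ($\mathrm{T}_i(\mathrm{C}_i(W))=\mathrm{T}_i(W)$) and conclude directly.
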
 

This market allows a contract to be substituted by another one signed by the same agents. For example, an agent may substitute a contract $x=\{e\}$ with another one $x'=\{e,e'\}$ as she manages to achieve a target that requires $e'$. A critical assumption of this market is that an elementary cooperation implemented by agents from $J$ are not enclosed in a contract signed by a larger set of agents $J'\supset J$. Otherwise, a contract signed by $J'$ may be substituted by another contract signed by $J\subset J'$. For instance, an agent would replace a contract $x=\{e,e'\}$ with another one $x'=\{e'\}$ when she achieves a target requiring $e'$ but achieves not target requiring $e$. The scale economies condition fails if $\mathrm{N}(e')= N(x')\subset\mathrm{N}(x)=\mathrm{N}(e)$. This assumption on contents of contracts may fail in real-life markets with multilateral contracts but naturally holds in markets in which agents only sign bilateral contracts (i.e., $\mathrm{N}(x)=2$ for all $x\in X$). An example of this setting is a market where countries establish bilateral agreements on matters such as trade, cultural exchange, and criminal investigations. A bilateral agreement between two countries outlines specific areas of collaboration, enabling a country to pursue its objectives in economic development, geopolitics, or international influence. In this context, a country may choose to substitute one agreement with another when both agreements are made with the same partner country. However, it would not replace an agreement with another one when the two agreements involve different partners.

\subsection{One group negotiating one contract}\label{Sec_one}

One group of agents often signs at most one contract in a market involving a single type of business. In this section, we consider a special case where (i) each group of agents negotiates at most one contract, and (ii) contracts signed by different groups are complementary to the agents. We demonstrate that stability and weak setwise stability coincide in this market, and thus, allowing us to identify a stable outcome using the CSD algorithm. Let 
\begin{equation*}
\mathcal{B}_i\equiv\{Y\subseteq X_i | \mathrm{N}(x)\neq \mathrm{N}(x') \text{ for any } x,x'\in Y\}
\end{equation*}
be the collection of subsets of contracts signed by agent $i$ in which no two contracts are signed by the same set of agents. We assume $Y\succ_i\emptyset$ implies $Y\in\mathcal{B}_i$ for any agent $i\in I$.

By Lemma \ref{lma_special}, agents' preferences exhibit scale economies if contracts signed by different groups are complementary for each agent. We also find that, when an individually rational outcome is blocked in the market described here, agents involved in the blocking coalition do not face non-cooperative decisions about which contracts to terminate. This observation establishes the equivalence between stability and weak setwise stability.

\begin{lemma}\label{lma_stable}
\normalfont
In a market in which one group negotiates at most one contract and contracts signed by different groups are complementary for agents, an outcome is stable if and only if it is weakly setwise stable.
\end{lemma}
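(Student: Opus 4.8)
The plan is to prove the two implications separately, treating the forward direction as immediate and concentrating all the work on the converse. For \emph{stable} $\Rightarrow$ \emph{weakly setwise stable}, I would observe that this holds in any market: both notions require individual rationality, so it suffices to note that every weakly setwise block is a block. Indeed, if $Y$ is weakly setwise blocked by $Z$ via some $Y^*$, then $Z_i\subseteq Y_i^*=\mathrm{C}_i(Y\cup Z)$ for all $i\in \mathrm{N}(Z)$, so in particular $Z_i\subseteq \mathrm{C}_i(Y\cup Z)$, which is exactly the blocking condition of Definition \ref{Def_stable}. Hence an outcome that cannot be blocked cannot be weakly setwise blocked.

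For the converse I would prove the contrapositive: an individually rational outcome $Y$ that is blocked by a nonempty $Z\subseteq X\setminus Y$ is in fact weakly setwise blocked by the same $Z$. (If $Y$ is not individually rational it is neither stable nor weakly setwise stable, so that case is trivial.) The only feature a block lacks, relative to a weakly setwise block, is a consistent outcome $Y^*$ with $Y_i^*=\mathrm{C}_i(Y\cup Z)$ for all $i\in \mathrm{N}(Z)$; equivalently, any two blocking agents $i,j\in \mathrm{N}(w)\cap \mathrm{N}(Z)$ must agree on whether a contract $w\in Y\cup Z$ lies in their choice from $Y\cup Z$. For $w\in Z$ this is automatic, since $\mathrm{N}(w)\subseteq \mathrm{N}(Z)$ and $w\in Z_k\subseteq \mathrm{C}_k(Y\cup Z)$ for every signatory $k$. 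The substantive case is $w\in Y$.

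Here I would suppose, toward a contradiction, that some $w\in Y$ with signatories $i,j\in \mathrm{N}(Z)$ satisfies $w\in \mathrm{C}_i(Y\cup Z)$ but $w\notin \mathrm{C}_j(Y\cup Z)$, and invoke both hypotheses. Since $Y$ is individually rational, $w\in \mathrm{C}_j(Y_j)=Y_j$, so agent $j$ drops $w$ only after $Z_j$ is added; because contracts signed by different groups are complementary for $j$, the contrapositive of Definition \ref{Def_special}(i) yields a contract $z\in Z_j$ with $\mathrm{N}(z)=\mathrm{N}(w)$. As $z\in Z$, every signatory $k\in \mathrm{N}(z)=\mathrm{N}(w)$ has $z\in Z_k\subseteq \mathrm{C}_k(Y\cup Z)$; in particular $z\in \mathrm{C}_i(Y\cup Z)$. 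Since $z\in X\setminus Y$ while $w\in Y$, the contracts $z$ and $w$ are distinct and share the group $\mathrm{N}(w)$, yet both lie in $\mathrm{C}_i(Y\cup Z)$. But $\mathrm{C}_i(Y\cup Z)$ is nonempty and satisfies $\mathrm{C}_i(Y\cup Z)\succeq_i\emptyset$, so by strictness $\mathrm{C}_i(Y\cup Z)\succ_i\emptyset$, whence the standing assumption forces $\mathrm{C}_i(Y\cup Z)\in \mathcal{B}_i$, which forbids two contracts of the same group---a contradiction. This establishes consistency, after which I would set $Y^*=\bigcup_{i\in \mathrm{N}(Z)}\mathrm{C}_i(Y\cup Z)$ and verify routinely that $Y^*\subseteq Y\cup Z$ and $Y_i^*=\mathrm{C}_i(Y\cup Z)\supseteq Z_i$ for each $i\in \mathrm{N}(Z)$, exhibiting the weakly setwise block.

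The main obstacle is exactly this consistency argument, which is where both structural assumptions are indispensable: different-group complementarity pins down that any dropped contract is displaced by a same-group contract drawn from $Z$, and the one-contract-per-group restriction on preferences then rules out any agent simultaneously retaining the old contract and adopting its same-group replacement. The market (\ref{exam_in1}), where these assumptions fail, shows that such an inconsistency genuinely cannot be excluded in general, which is what makes the two conditions essential rather than cosmetic.
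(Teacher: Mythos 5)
Your proof is correct and takes essentially the same approach as the paper's: the same decomposition (the forward direction holding trivially because every weakly setwise block is a block), the same candidate $Y^*=\bigcup_{i\in \mathrm{N}(Z)}\mathrm{C}_i(Y\cup Z)$, and the same consistency argument in which individual rationality plus different-group complementarity yields a same-group replacement $z\in Z_j$ that agent $i$ must also choose, contradicting the one-contract-per-group assumption. If anything, you spell out the final contradiction (via strictness and the standing assumption that $\mathrm{C}_i(Y\cup Z)\in\mathcal{B}_i$) more explicitly than the paper does.
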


To illustrate the role of our assumptions, consider a scenario where Ana and Bob (alongside other agents) block (see Definition \ref{Def_stable}) an individually rational outcome. Suppose Ana wishes to drop a contract $x$ signed with Bob and others. This can only occur if Ana substitutes $x$ with other contracts, as the blocked outcome is individually rational. Under our assumptions, the newly signed contracts must include a contract $y$ involving the same set of agents as $x$. Consequently, Ana and Bob must have mutually agreed to sign $y$, implying Bob also decided to drop $x$ (since each group negotiates at most one contract). Thus, non-cooperative decisions about contract termination never arise during blocking, ensuring stability and weak setwise stability align. 

Theorem \ref{thm_alg}, Lemma \ref{lma_special}, and Lemma \ref{lma_stable} imply the following result.

\begin{proposition}\label{prop_stable}
\normalfont
In a market in which each group negotiates at most one contract and contracts signed by different groups are complementary for agents, the CSD algorithm produces a stable outcome.
\end{proposition}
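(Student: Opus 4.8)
The plan is to derive Proposition~\ref{prop_stable} as a direct composition of the three results cited just before it, so that essentially no new argument is needed beyond checking that their hypotheses all hold simultaneously in the market under consideration. First I would fix a market satisfying the two standing assumptions of this subsection: (i) each group of agents negotiates at most one contract, and (ii) contracts signed by different groups are complementary for every agent. I would then run the CSD algorithm of Section~\ref{Sec_Alg} on this market for an arbitrary exogenous agent ordering $i_1,\ldots,i_{|I|}$, producing an outcome $Y^{|I|-1}$.

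The core of the argument is a chain of implications. By Lemma~\ref{lma_special}(i), assumption~(ii) guarantees that every agent's preference exhibits scale economies, so the whole market exhibits scale economies. This means the hypothesis of Theorem~\ref{thm_alg} is met, and I would invoke that theorem to conclude that $Y^{|I|-1}$ is \emph{weakly setwise stable}. Next I would apply Lemma~\ref{lma_stable}, whose hypotheses are exactly assumptions~(i) and~(ii): in such a market, an outcome is stable if and only if it is weakly setwise stable. Since $Y^{|I|-1}$ is weakly setwise stable, it is therefore stable, which is precisely the desired conclusion.

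I would present this as a short three-sentence proof: (a) assumption~(ii) plus Lemma~\ref{lma_special}(i) give scale economies; (b) Theorem~\ref{thm_alg} then yields that the CSD output is weakly setwise stable; (c) Lemma~\ref{lma_stable}, applicable by assumptions~(i)--(ii), upgrades weak setwise stability to stability. The only point requiring any care is verifying that the hypotheses of the three invoked results are genuinely the assumptions in force here---in particular that ``contracts signed by different groups are complementary for agents'' is simultaneously the input to Lemma~\ref{lma_special}(i) (to obtain scale economies for Theorem~\ref{thm_alg}) and part of the joint hypothesis of Lemma~\ref{lma_stable}---but this is an exact textual match, so there is no genuine obstacle.

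In short, since no step introduces new combinatorial content, I do not expect a hard part; the proof is a bookkeeping composition, and the main thing to state explicitly is that Lemma~\ref{lma_alg} already ensures the CSD algorithm is well defined and terminates with an individually rational, constrained efficient outcome, so that Theorem~\ref{thm_alg} indeed applies to its output.
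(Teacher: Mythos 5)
Your proposal is correct and matches the paper exactly: the paper itself states that Proposition~\ref{prop_stable} follows from composing Theorem~\ref{thm_alg}, Lemma~\ref{lma_special}, and Lemma~\ref{lma_stable}, which is precisely your chain (scale economies via Lemma~\ref{lma_special}(i), weak setwise stability of the CSD output via Theorem~\ref{thm_alg}, and the upgrade to stability via Lemma~\ref{lma_stable}). No further comparison is needed.
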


An example of this market structure is a social network where a pair of agents is connected by at most one link. In a framework with externalities and indifference, RY highlighted complementarities as a defining feature of such markets. In our basic setting, agents may form distinct link types with one another, where each type specifies a unique relationship between them.

\section{Ordinal scale economies}\label{Sec_ordinal}

In this section, we propose an ordinal scale economies condition—stronger than the scale economies condition—that guarantees the existence of a setwise-stable outcome.

A setwise stable outcome does not necessarily exist in a market with scale economies. This is shown by the following market with three agents and four contracts.
\begin{center}
\begin{tikzpicture}[scale=0.6]

  \node (n1) at (0,0) {Bob};
  \node (n2) at (4,0) {Carol};
  \node (n3) at (2,3.4) {Ana};
  \node (n4) at (0.2,1.9) {$x$};
  \node (n5) at (1.7,1.5) {$y$};
  \node (n6) at (3.6,2) {$z$};
  \node (n7) at (2,-0.8) {$u$};
  \draw [very thick](0,0.6) ..controls (0.5,1.8)..(1.3,2.8)
        (0.5,0.6)..controls (1.3,1.6)..(1.7,2.8)
        (3.6,0.6)..controls (3.2,2)..(2.2,2.8)
        (0.9,0)..controls (2,-0.3)..(2.9,0);
\end{tikzpicture}
\end{center}
The agents have the preferences
\begin{equation}\label{exam_nosetwise}
\begin{aligned}
&\text{Ana} : \quad \{x,y,z\}\succ\{x\}\succ\{y,z\}\succ\emptyset,\\
&\text{Bob} : \quad \{x,u\}\succ\{y,u\}\succ\emptyset,\\
&\text{Carol} : \ \ \{z,u\}\succ\emptyset.
\end{aligned}
\end{equation}
Bob chooses the contracts $y$ and $u$ from $\{y,u\}$ but does not choose $y$ when the contract $x$ becomes available. This is the only substitution in this market. The market exhibits scale economies since $x$ and $y$ are both signed by Ana and Bob. This market has a unique nonempty individually rational outcome $\{y,z,u\}$, which is by Lemma \ref{lma_IR} the only weakly setwise stable outcome. However, the outcome $\{y,z,u\}$ is not setwise stable: In the presence of the contracts $y,z,$ and $u$, Ana and Bob may negotiate to sign $x$ and drop $y$ and $z$.

An agent’s preference exhibits ordinal scale economies if the following holds: When an agent becomes better off (while maintaining individual rationality) by replacing existing contracts with new ones, she remains better off and individually rational even after reinstating any dropped contracts involving partners excluded from the newly signed agreements.
\begin{definition}
\normalfont
Agent $i$'s preference exhibits \textbf{ordinal scale economies} if for any sets of contracts $Y,Y'\subseteq X_i$ with $\mathrm{C}_i(Y)=Y$ and $\mathrm{C}_i(Y')=Y'$, $Y'\succ_i Y$ implies $Y'\cup Z=\mathrm{C}_i(Y'\cup Z)$ for any $Z\subseteq Y\setminus Y'$ satisfying $\mathrm{N}(x)\setminus \mathrm{N}(Y'\setminus Y)\neq\emptyset$ for each $x\in Z$. A market exhibits ordinal scale economies if each agent's preference exhibits ordinal scale economies.
\end{definition}
In the above definition, notice that agent $i$ holding contracts from $Y$ becomes better off and maintains individual rationality by replacing contracts from $Y\setminus Y'$ with contracts from $Y'\setminus Y$. The set $Z$ is a subset of the dropped contracts $Y\setminus Y'$, and each contract in $Z$ involves partners she does not cooperative with in the newly signed contracts: $\mathrm{N}(x)\setminus \mathrm{N}(Y'\setminus Y)$ is nonempty for each $x\in Z$. Therefore, the requirement $Y'\cup Z=\mathrm{C}_i(Y'\cup Z)$ means that the agent $i$ continues to become better off and maintain individual rationality by getting back the dropped contracts from $Z$. For instance, Ana's preference in the market (\ref{exam_nosetwise}) fails ordinal scale economies: When she holds contracts $y$ and $z$, she becomes better off and maintains individual rationality by dropping the two contracts and signing contract $x$. Among the dropped contracts, Ana cooperates with Carol in the contract $z$, but she does not cooperate with Carol in the newly signed contract $x$. The ordinal scale economies condition requires Ana to continue to become better off and maintain individual rationality by getting back $z$. However, the set $\{x,z\}$ is not acceptable to Ana. If Ana's preference is changed into \begin{equation}\label{Ana2}
\{x,y,z\}\succ\{x,z\}\succ\{x\}\succ\{y,z\}\succ\emptyset,
\end{equation}
then her preference exhibits ordinal scale economies. 

The ordinal scale economies condition indeed implies scale economies. To illustrate, consider a market with ordinal scale economies where an agent’s available contracts expand. Suppose she replaces some existing contracts with new ones that exclude certain partners. By the ordinal scale economies property, she would still remain better off by reinstating the dropped contracts associated with the excluded partners. This contradicts the optimality of her initial choice among available options, as the reinstated contracts would further improve her position.

\begin{lemma}\label{lma_OSE}
\normalfont
In a market with ordinal scale economies, an individually rational outcome cannot be setwise blocked if and only if it is constrained efficient.
\end{lemma}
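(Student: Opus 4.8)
The plan is to prove the two implications separately. The forward direction---that an individually rational outcome which is not constrained efficient can be setwise blocked---turns out to be elementary and does not use ordinal scale economies; the real content, and the place where ordinal scale economies is indispensable, is the converse, that a constrained efficient outcome cannot be setwise blocked.

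For the forward direction I would argue the contrapositive. Suppose $Y$ is individually rational but not constrained efficient, so there is an individually rational $Y'$ that Pareto dominates $Y$. I would set $Z\equiv Y'\setminus Y$ and $Y^*\equiv Y'$ and claim this is a setwise block. The conditions $Y^*\subseteq Y\cup Z$, $Z\subseteq Y^*$, and the individual rationality of $Y^*$ for every agent are immediate. The one nontrivial point is the strict improvement $Y^*_i\succ_i Y_i$ for all $i\in \mathrm{N}(Z)$, and here I would use the key observation that $\mathrm{N}(Z)\subseteq\{i\in I : Y'_i\succ_i Y_i\}$: if $i\in \mathrm{N}(Z)$ signs a contract $x\in Y'\setminus Y$, then $x\in Y'_i\setminus Y_i$, so $Y'_i\neq Y_i$, and since $Y'_i\succeq_i Y_i$ by Pareto dominance and preferences are strict, $Y'_i\succ_i Y_i$. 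I would also note $Z\neq\emptyset$, since $Y'\subseteq Y$ together with individual rationality of $Y$ would force $Y_i\succeq_i Y'_i$ for every $i$, contradicting Pareto dominance.

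For the converse I would suppose, toward a contradiction, that the constrained efficient (and individually rational) outcome $Y$ is setwise blocked by $Z$ via $Y^*$, and construct an individually rational outcome that Pareto dominates $Y$. Let $V\equiv\{x\in Y : \mathrm{N}(x)\cap \mathrm{N}(Z)=\emptyset\}$ be the contracts of $Y$ untouched by the coalition, let $W\equiv\{x\in Y\setminus Y^* : \mathrm{N}(x)\not\subseteq \mathrm{N}(Z)\}$ be the dropped contracts that still involve an agent outside $\mathrm{N}(Z)$, and set $\hat Y\equiv Y^*\cup W\cup V$. For an agent $j\notin \mathrm{N}(Z)$ I would check that $\hat Y_j=Y_j$: every contract of $j$ in $Y$ is either untouched (in $V$), kept (in $Y^*$), or dropped---and every dropped contract of $j$ lies in $W$ because it involves $j\notin \mathrm{N}(Z)$---so $j$ is exactly as well off as in $Y$ and $\hat Y$ remains individually rational for $j$. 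For a blocking agent $i\in \mathrm{N}(Z)$ I would note $\hat Y_i=Y^*_i\cup W_i$, apply ordinal scale economies with the roles $Y_i$ and $Y^*_i$ (both self-chosen, with $Y^*_i\succ_i Y_i$), and conclude $\mathrm{C}_i(Y^*_i\cup W_i)=Y^*_i\cup W_i$; this gives both individual rationality of $\hat Y$ for $i$ and, since $\mathrm{C}_i(Y^*_i\cup W_i)\succeq_i Y^*_i\succ_i Y_i$, a strict improvement over $Y_i$. Hence $\hat Y$ is individually rational and Pareto dominates $Y$, contradicting constrained efficiency.

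The crux---and the main obstacle---is verifying the hypothesis of ordinal scale economies for the set $W_i$ of restored contracts. I would show that $W_i\subseteq Y_i\setminus Y^*_i$ and that every $x\in W_i$ satisfies $\mathrm{N}(x)\setminus \mathrm{N}(Y^*_i\setminus Y_i)\neq\emptyset$: the new contracts $Y^*_i\setminus Y_i$ all lie in $Z$, so $\mathrm{N}(Y^*_i\setminus Y_i)\subseteq \mathrm{N}(Z)$, whereas each $x\in W_i$ involves some agent outside $\mathrm{N}(Z)$ by the definition of $W$, and this agent witnesses the required nonemptiness. Once this matching of the definition is in place the rest is bookkeeping; I would also record the routine fact that individual rationality of $Y^*$ for $i$ yields $\mathrm{C}_i(Y^*_i)=Y^*_i$, which is what licenses taking $Y^*_i$ as the improved self-chosen set in the ordinal scale economies condition.
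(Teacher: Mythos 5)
Your proposal is correct and follows essentially the same route as the paper's proof: the ``only if'' direction is the same elementary block $Z=Y'\setminus Y$, and the ``if'' direction hinges on exactly the paper's key step---reinstating, via ordinal scale economies, the dropped contracts that involve agents outside $\mathrm{N}(Z)$, using $Y^*\setminus Y\subseteq Z$ to verify the hypothesis $\mathrm{N}(x)\setminus\mathrm{N}(Y^*_i\setminus Y_i)\neq\emptyset$. The only difference is organizational: your single set $W$ and outcome $\hat Y=Y^*\cup W\cup V$ (which coincides with the paper's $Y^{**}\cup Z'$) absorbs the paper's separate cases ($\mathrm{N}(Z)=I$, no outside contracts dropped, some dropped) into one unified construction, with the degenerate cases handled vacuously.
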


The``only if'' direction is straightforward and does not depend on ordinal scale economies: If an individually rational outcome $Y$ is Pareto dominated by another individually rational outcome $Y'$, then $Y$ is setwise blocked by $Y'\setminus Y$. To understand the role of ordinal scale economies in the``if'' direction, consider a constrained efficient and individually rational outcome involving agents Ana, Bob, and Carol. Suppose this outcome is setwise blocked. Using reasoning analogous to Lemma \ref{lma_IR}, we deduce that the blocking coalition involves two agents—e.g., Ana and Bob—and requires dropping contracts with Carol. However, ordinal scale economies dictate that Ana and Bob remain better off and individually rational after reinstating the dropped contracts with Carol. Doing so would transform the allocation into a new individually rational outcome that Pareto dominates the original one—contradicting the original outcome’s constrained efficiency. 

Lemma \ref{lma_alg} and Lemma \ref{lma_OSE} imply the following result.

\begin{theorem}\label{thm_setwise}
\normalfont
The CSD algorithm produces a setwise stable outcome in a market with ordinal scale economies.
\end{theorem}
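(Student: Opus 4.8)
The plan is to prove Theorem \ref{thm_setwise} by combining two results that are already available to us, namely Lemma \ref{lma_alg} and Lemma \ref{lma_OSE}, so that the theorem reduces to a short chaining argument rather than a fresh construction. First I would recall that by Lemma \ref{lma_alg}, the CSD algorithm always produces an outcome $Y^{|I|-1}$ that is individually rational and constrained efficient, and crucially this conclusion holds in \emph{any} market without reference to the preference structure. Thus the output of the algorithm is guaranteed to land in the class of individually rational, constrained efficient outcomes regardless of whether ordinal scale economies holds.

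Next I would invoke Lemma \ref{lma_OSE}, which states that in a market with ordinal scale economies an individually rational outcome cannot be setwise blocked \emph{if and only if} it is constrained efficient. Since the CSD output is individually rational and constrained efficient, the ``if'' direction of Lemma \ref{lma_OSE} applies directly: the output cannot be setwise blocked. Combining this with individual rationality yields precisely the definition of setwise stability in Definition \ref{Def_setwise}(ii). Writing the argument out, I would state: let $Y$ denote the outcome produced by the CSD algorithm in a market with ordinal scale economies; by Lemma \ref{lma_alg}, $Y$ is individually rational and constrained efficient; by the ``if'' direction of Lemma \ref{lma_OSE}, $Y$ cannot be setwise blocked; hence $Y$ is individually rational and cannot be setwise blocked, so $Y$ is setwise stable.

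There is essentially no obstacle in this final step, since both constituent lemmas have been established earlier in the paper; the real mathematical content lives in Lemma \ref{lma_OSE} (where ordinal scale economies is used to rule out blocks that drop contracts with excluded partners) and in Lemma \ref{lma_alg} (where constrained efficiency of the CSD output is verified). The one point I would be careful about is to confirm that Lemma \ref{lma_alg} genuinely requires no preference assumption, so that it remains valid in the ordinal-scale-economies market; the excerpt explicitly notes that ``this lemma does not rely on scale economies,'' which secures this. I would therefore present the proof as a two-line deduction, perhaps adding a sentence noting that this mirrors the structure of Theorem \ref{thm_alg} (which chained Lemma \ref{lma_alg} with Lemma \ref{lma_IR} for weak setwise stability), with ordinal scale economies and setwise stability now playing the roles that scale economies and weak setwise stability played there.
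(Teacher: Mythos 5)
Your proposal is correct and matches the paper's own argument exactly: the paper proves Theorem \ref{thm_setwise} by the same two-line chaining of Lemma \ref{lma_alg} (CSD yields an individually rational, constrained efficient outcome, with no preference assumption needed) with the ``if'' direction of Lemma \ref{lma_OSE}. Nothing is missing.
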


The ordinal scale economies condition is independent of the complementarity condition: All agents' preferences in the markets (\ref{exam_in1}) and (\ref{exam_in2}) exhibit ordinal scale economies since all contracts are signed by the same agents; but contracts are not complementary for Bob in (\ref{exam_in1}) or for Ana in (\ref{exam_in2}). On the other hand, contracts are complementary in Ana's preference in the market (\ref{exam_nosetwise}), but this preference fails ordinal scale economies.

\section{Appendix}

\subsection{Omitted proofs}

\begin{proof}[Proof of Lemma \ref{lma_IR}]
Suppose an individually rational outcome $Y$ is constrained efficient and weakly setwise blocked by $Z\subseteq X\setminus Y$. Let $Y^*\subseteq Y\cup Z$ be the outcome such that $Z_i\subseteq Y^*_i=\mathrm{C}_i(Y\cup Z)$ for all $i\in \mathrm{N}(Z)$. For each $i\in \mathrm{N}(Z)$, $\emptyset\neq Z_i\subseteq \mathrm{C}_i(Y\cup Z)$ implies $\mathrm{C}_i(Y\cup Z)\neq \mathrm{C}_i(Y)$, and thus, $Y^*_i=\mathrm{C}_i(Y\cup Z)\succ_i \mathrm{C}_i(Y)=Y_i$. We also know that for all $i\in \mathrm{N}(Z)$, $Y^*_i=\mathrm{C}_i(Y\cup Z)\succeq_i\mathrm{C}_i(Y^*)\succeq_iY^*_i$, and thus, $Y^*_i=\mathrm{C}_i(Y^*)$. Hence, $Y^*$ is individually rational for all $i\in \mathrm{N}(Z)$, and all agents from $\mathrm{N}(Z)$ are better off in $Y^*$ than in $Y$.

If $\mathrm{N}(Z)=I$, then $Y^*$ is an individually rational outcome that Pareto dominates $Y$. A contradiction. Hence, $I\setminus \mathrm{N}(Z)$ is nonempty. Let $V=\{x\in Y|i\in \mathrm{N}(x)\text{ implies }i\notin \mathrm{N}(Z)\}$ be the collection of contracts in $Y$ that are not signed by any of $\mathrm{N}(Z)$. Let $Y^{**}=Y^*\cup V$. Recall that when $Y$ is weakly setwise blocked by $Z$, $Y$ is brought into $Y^{**}$. We have $Y^{**}\subseteq Y\cup Z$, $Z_i\subseteq Y^{**}_i=\mathrm{C}_i(Y\cup Z)$ for all $i\in \mathrm{N}(Z)$, and we also know that $Y^{**}$ is individually rational for all $i\in \mathrm{N}(Z)$, and all agents from $\mathrm{N}(Z)$ are better off in $Y^{**}$ than in $Y$.

Suppose for each $j\in I\setminus \mathrm{N}(Z)$, $Y_j\subseteq Y^{**}$, which means that the agents participating in the block do not drop any contracts signed with agents outside $\mathrm{N}(Z)$. Then, since $Y^{**}\subseteq Y\cup Z$, we have $Y^{**}_j=Y_j$ for each $j\in I\setminus \mathrm{N}(Z)$. Since $Y$ is individually rational, we know that $Y^{**}$ is individually rational for all $j\in I\setminus \mathrm{N}(Z)$. Then, since $Y^{**}$ is individually rational for all $i\in \mathrm{N}(Z)$, we know that $Y^{**}$ is individually rational. Since $Y^{**}_i\succ_iY_i$ for each $i\in \mathrm{N}(Z)$, and $Y^{**}_j=Y_j$ for each $j\in I\setminus \mathrm{N}(Z)$, we know that $Y^{**}$ Pareto dominates $Y$.   Hence, $Y^{**}$ is individually rational and Pareto dominates $Y$. A contradiction. We know that there is an agent $j^*\in I\setminus \mathrm{N}(Z)$ with $Y_{j^*}\setminus Y^{**}\neq\emptyset$. Then, since $V\subseteq Y^{**}$, we know that there is a contract $x^*\in Y_{j^*}\setminus Y^{**}$ and an agent $i^*\in \mathrm{N}(Z)$ such that $i^*\in \mathrm{N}(x^*)$. Thus, we have $x^*\in Y_{i^*}=\mathrm{C}_{i^*}(Y)=\mathrm{C}_{i^*}(Y_{i^*})$ and $x^*\notin Y_{i^*}^{**}=\mathrm{C}_{i^*}(Y\cup Z)=\mathrm{C}_{i^*}(Y_{i^*}\cup Z_{i^*})$. Therefore, we have $j^*\in \mathrm{N}( \mathrm{C}_{i^*}(Y_{i^*})\setminus \mathrm{C}_{i^*}(Y_{i^*}\cup Z_{i^*}))$; then since $j^*\in I\setminus \mathrm{N}(Z)$, we have $j^*\notin \mathrm{N}(Z_{i^*})= \mathrm{N}(\mathrm{C}_{i^*}(Y_{i^*}\cup Z_{i^*})\setminus \mathrm{C}_{i^*}(Y_{i^*}))$. This contradicts the scale economies condition.
\end{proof}

\bigskip

\begin{proof}[Proof of Lemma \ref{lma_special}]
(i) For agent $i$, suppose $j\in \mathrm{N}(\mathrm{C}_i(Y)\setminus \mathrm{C}_i(Y'))$ and $j\notin \mathrm{N}(\mathrm{C}_i(Y')\setminus \mathrm{C}_i(Y))$ for some $Y,Y'\subseteq X_i$ with $Y\subset Y'$, then there is $x\in X_j$ satisfying $x\in \mathrm{C}_i(Y)$ and $x\notin \mathrm{C}_i(Y')$. Let $Y''=\mathrm{C}_i(Y)$ and $Z''=\mathrm{C}_i(Y')\setminus \mathrm{C}_i(Y)$. Since $j\notin \mathrm{N}(Z'')$, there is no contract in $Z''$ that is signed by $\mathrm{N}(x)$. Then $x\in \mathrm{C}_i(Y'')=Y''$ and $x\notin \mathrm{C}_i(Y')= \mathrm{C}_i(Y''\cup Z'')$ imply that contracts signed by different groups are not complementary for the agent $i$.

(ii) For any two sets of contracts $Y$ and $Y'$ with $Y\subset Y'\subseteq X_i$, let $k=|\mathrm{C}_i(Y')\setminus \mathrm{C}_i(Y)|$, and $x^1,x^2,\ldots,x^k$ an arbitrary permutation of the contracts from $\mathrm{C}_i(Y')\setminus \mathrm{C}_i(Y)$.\footnote{We omit the trivial case of $\mathrm{C}_i(Y')\setminus \mathrm{C}_i(Y)=\emptyset$, which implies $\mathrm{C}_i(Y')=\mathrm{C}_i(Y)$ and thus $\mathrm{N}(\mathrm{C}_i(Y)\setminus \mathrm{C}_i(Y'))\subseteq \mathrm{N}(\mathrm{C}_i(Y')\setminus \mathrm{C}_i(Y))$.} For each $j\in \{1,2,\ldots,k\}$, let $Z^j=\mathrm{C}_i(Y)\cup\{x^1,x^2,\ldots,x^j\}$; and let $Z^0=\mathrm{C}_i(Y)$. If $x\in \mathrm{C}_i(Z^j)\setminus \mathrm{C}_i(Z^{j+1})$ for some $j\in\{0,\ldots,k-1\} $, we know that $x$ is dropped as $x^{j+1}$ is added into the available set $Z^j$. Now we add the contracts $x^1,x^2,\ldots,x^k$ one by one into $Z^0$. For any $x\in\mathrm{C}_i(Y)\setminus \mathrm{C}_i(Y')$, the contract $x$ must be dropped at some step: $x\in \mathrm{C}_i(Z^j)\setminus \mathrm{C}_i(Z^{j+1})$ for some $j\in\{0,\ldots,k-1\} $ (note that $\mathrm{C}_i(Z^0)=\mathrm{C}_i(Y)$ and $\mathrm{C}_i(Z^k)=\mathrm{C}_i(Y')$).  Then, the single-contract scale economies condition indicates $N(x)\subseteq N(x^j)$. Consequently, we have $\mathrm{N}(\mathrm{C}_i(Y)\setminus \mathrm{C}_i(Y'))\subseteq \mathrm{N}(\mathrm{C}_i(Y')\setminus \mathrm{C}_i(Y))$.
\end{proof}

\bigskip

\begin{proof}[Proof of Proposition \ref{prop_app}]
Let $Y\subseteq X_i$ be an individually rational set of contracts signed by agent $i$: $Y=\mathrm{C}_i(Y)$. For any contract $x\in Y$, we know that $x$ is not redundant: there exists $e'\in x$ such that $e'\notin\bigcup_{y\in Y: y\neq x}y$ and $e'\in E^{t'}_i$ for some $t'\in \mathrm{T}_i(Y)$. Otherwise, there is $Y'\subseteq Y\setminus\{x\}$ with $\mathrm{T}_i(Y')=\mathrm{T}_i(Y)$; then the assumption (ii) implies $Y'\succ_i Y$, which contradicts $Y=\mathrm{C}_i(Y)$. Suppose $x\notin\mathrm{C}_i(Y\cup Z)$ for some $Z\subseteq X_i$ with $\mathrm{N}(x)\neq\mathrm{N}(z)$ for all $z\in Z$. Since $e'\notin\bigcup_{y\in Y: y\neq x}y$ with $\mathrm{N}(e')=\mathrm{N}(x)\neq\mathrm{N}(z)$ for all $z\in Z$, we have $e'\notin\bigcup_{y\in\mathrm{C}_i(Y\cup Z)}y$, and thus $t'\notin\mathrm{T}_i(\mathrm{C}_i(Y\cup Z))$. Hence, there is $Y''\subseteq Y\cup Z$ such that $t'\in\mathrm{T}_i(Y'')$ and $\mathrm{T}_i(\mathrm{C}_i(Y\cup Z))\subset\mathrm{T}_i(Y'')$. Then, according to the assumption (i), we have $Y''\succ_i\mathrm{C}_i(Y\cup Z)$ which contradicts $Y''\subseteq Y\cup Z$.
\end{proof}

\bigskip

\begin{proof}[Proof of Lemma \ref{lma_stable}]
The ``only if'' part is trivial. The ``if'' part is equivalent to the statement that an individually rational outcome can be weakly setwise blocked if it can be blocked. Let $Y$ be an individually rational outcome. Suppose $Y$ is blocked by $Z\subseteq X\setminus Y$, and let $Y^*=\bigcup_{i\in \mathrm{N}(Z)}\mathrm{C}_i(Y\cup Z)$. Since $Z_i\subseteq \mathrm{C}_i(Y\cup Z)$ for all $i\in \mathrm{N}(Z)$, we have $Z_i\subseteq Y^*_i$ for all $i\in \mathrm{N}(Z)$. We want to prove that $Y$ is weakly setwise blocked by $Z$, then it suffices to show $Y_i^*=\mathrm{C}_i(Y\cup Z)$ for all $i\in \mathrm{N}(Z)$, which holds if there is no $x\in \mathrm{C}_i(Y\cup Z)$ for some $i\in \mathrm{N}(Z)$ such that $x\notin \mathrm{C}_j(Y\cup Z)$ for some $j\in \mathrm{N}(Z)\cap \mathrm{N}(x)$. Suppose there exists such a contract $x$ satisfying $x\in \mathrm{C}_i(Y\cup Z)$ for some $i\in \mathrm{N}(Z)$ and $x\notin \mathrm{C}_j(Y\cup Z)$ for some $j\in \mathrm{N}(Z)\cap \mathrm{N}(x)$. Since $x\notin \mathrm{C}_j(Y\cup Z)$, we have $x\notin Z$, otherwise $Z_j\subseteq \mathrm{C}_j(Y\cup Z)$ does not hold. Thus, we have $x\in Y$. The individual rationalities of $Y$ indicate $x\in \mathrm{C}_j(Y_j)$. Then, since contracts singed by different groups are complementary, $x\notin \mathrm{C}_j(Y_j\cup Z_j)$ implies that there is $x'\in Z_j$ with $\mathrm{N}(x)=\mathrm{N}(x')$. Notice that $i\in \mathrm{N}(x)=\mathrm{N}(x')$, $x\in \mathrm{C}_i(Y\cup Z)$, and $x'\in Z\subseteq \mathrm{C}_i(Y\cup Z)$. Thus, agent $i$ chooses both $x$ and $x'$ from $Y\cup Z$. This contradicts the assumption that each group negotiates at most one contract.
\end{proof}

\bigskip

\begin{proof}[Proof of Lemma \ref{lma_OSE}]

The ``only if'' part is straightforward. The ``if'' part: Suppose an individually rational outcome $Y$ is constrained efficient and setwise blocked by $Z\subseteq X\setminus Y$. Let $Y^*\subseteq Y\cup Z$ be the outcome such that $Z\subseteq Y^*$, $ Y^*_i\succ_iY_i$, and $Y^*$ is individually rational for all $i\in \mathrm{N}(Z)$. If $\mathrm{N}(Z)=I$, then $Y^*$ is an individually rational outcome that Pareto dominates $Y$. A contradiction. Hence, $I\setminus \mathrm{N}(Z)$ is nonempty. Let $V=\{x\in Y|i\in \mathrm{N}(x)\text{ implies }i\notin \mathrm{N}(Z)\}$ be the collection of contracts in $Y$ that are not signed by any of $\mathrm{N}(Z)$. Let $Y^{**}=Y^*\cup V$. Recall that when $Y$ is setwise blocked by $Z$, $Y$ is brought into $Y^{**}$. We have $Y^{**}\subseteq Y\cup Z$, $ Y^{**}_i\succ_iY_i$, and $Y^{**}$ is individually rational for all $i\in \mathrm{N}(Z)$.

Suppose for each $j\in I\setminus \mathrm{N}(Z)$, $Y_j\subseteq Y^{**}$, which means that the agents participating in the block do not drop any contracts signed with agents outside $\mathrm{N}(Z)$. Then, since $Y^{**}\subseteq Y\cup Z$, we have $Y^{**}_j=Y_j$ for each $j\in I\setminus \mathrm{N}(Z)$. Since $Y$ is individually rational, we know that $Y^{**}$ is individually rational for all $j\in I\setminus \mathrm{N}(Z)$. Then, since $Y^{**}$ is individually rational for all $i\in \mathrm{N}(Z)$, we know that $Y^{**}$ is individually rational. Since $Y^{**}_i\succ_iY_i$ for each $i\in \mathrm{N}(Z)$, and $Y^{**}_j=Y_j$ for each $j\in I\setminus \mathrm{N}(Z)$, we know that $Y^{**}$ Pareto dominates $Y$. Hence, $Y^{**}$ is individually rational and Pareto dominates $Y$. A contradiction.

We know that $(Y\setminus Y^{**})_{I\setminus \mathrm{N}(Z)}\neq\emptyset$. Let $Z'=(Y\setminus Y^{**})_{I\setminus \mathrm{N}(Z)}$. For each $x\in Z'$, since $Y^{**}\setminus Y=Z$, we have $\mathrm{N}(x)\setminus \mathrm{N}(Y^{**}\setminus Y)=\mathrm{N}(x)\setminus \mathrm{N}(Z)\neq\emptyset$. Then, according to the ordinal scale economies condition, since $Y^{**}_i\succ_iY_i$ for each $i\in \mathrm{N}(Z)$, we know that for each $i\in \mathrm{N}(Z)$, $\mathrm{C}_i(Y^{**}_i\cup Z'_i)=Y^{**}_i\cup Z'_i$. Then, since $Y^{**}_j\cup Z'_j=Y_j$ for each $j\in I\setminus \mathrm{N}(Z)$, we know that $Y^{**}\cup Z'$ is an individually rational outcome that Pareto dominates $Y$. This also contradicts that $Y$ is individually rational and constrained efficient.
\end{proof}

\bigskip

\end{document}